\newif\ifproof 
\def\epsilon{\varepsilon}
\title{Multi-weighted Markov Decision Processes\\ with Reachability Objectives}
\author{Patricia Bouyer  \hspace*{1cm}   Mauricio Gonz\'{a}lez
  \institute{LSV -- CNRS \& ENS Paris-Saclay -- Universit\'e Paris-Saclay -- France}
  \email{\{patricia.bouyer,mauricio.gonzalez\}@lsv.fr}
  \and
  Nicolas Markey
  \institute{IRISA -- CNRS \& Univ. Rennes \& INRIA -- France}
  \email{nicolas.markey@irisa.fr}
  \and
  Mickael Randour
  \institute{FNRS \& UMONS -- Belgium}
  \email{mickael.randour@umons.ac.be}
}
\begin{document}
\maketitle

\begin{abstract}
  In this paper, we are interested in the synthesis of schedulers in
  double-weighted Markov decision processes, which satisfy both a
  percentile constraint over a weighted reachability condition, and a
  quantitative constraint on the expected value of a random variable
  defined using a weighted reachability condition. This problem is
  inspired by the modelization of an electric-vehicle charging
  problem.  We study the cartography of the problem, when one
  parameter varies, and show how a partial cartography can be obtained
  via two sequences of opimization problems. We discuss completeness
  and feasability of the method.
  % We show how to compute solutions (when they exist), and discuss
  % completeness and feasibility of the method.
\end{abstract}

\section{Introduction}

\paragraph{Importing formal methods in connected fields.}
Formal methods can help providing algorithmic solutions for control
design. The~electric-vehicle (EV) charging problem is an example of
such an application area. This~problem, usually presented as a control
problem (see~eg.~\cite{BLHM16}), can~actually be modelled
% \slash abstracted
using Markov decision processes (MDP)~\cite{DI14,JP16,GBBLM17}.
Probabilities provide a way to model the non-flexible part of the
energy network (consumption outside~EV, for~which large databases
exist---and from which precise statistics can be extracted); we~can
then express an upper bound on the peak load as a safety condition
(encoded as a reachability condition in our finite-horizon model),
the~constraint on the charging of all vehicles as a quantitative
reachability objective, and various optimization criteria
(e.g.~minimizing the ageing of distribution transformers, or the
energy price) as optimization of random cost variables.

Due to the specific form of the constructed model (basically
acyclicity of the model), an~ad-hoc method could be implemented using
the tool PRISM, yielding interesting practical results as reported
in~\cite{GBBLM17} and in a forthcoming paper. However,
the~computability of an optimal strategy in a general~MDP, as~well as
the corresponding decision problem, was unexplored.
% is actually unknown.

\paragraph{Markov decision processes.}
MDPs have been studied for long~\cite{puterman94,FV97}. An~MDP is a
finite-state machine, on which a kind of game is played as
follows. In~each~state, several decisions (a.k.a.~actions) are
available, each yielding a distribution over possible successor
states.  Once an action is selected, the~next state is chosen
probabilistically, following the distribution corresponding to the
selected action. The~game proceeds that way \textit{ad~infinitum},
generating an infinite play. The~way actions are chosen is according
to a \emph{strategy} (also~called \emph{policy} in the context of
MDPs). Rewards~and\slash or costs can be associated to each action or
edge, and various rules for aggregating individual rewards and costs
encountered along a play can be applied to obtain various payoff
functions. Examples of payoff functions include:
\begin{itemize}
\item sum up all the encountered costs (or rewards) along a play,
  until reaching some target (finite if the target is reached,
  infinite otherwise)---this~is the so-called \emph{truncated-sum}
  payoff;
\item sum up all the encountered costs (or rewards) along a play with
  a discount factor at each step---this~is the so-called
  \emph{discounted-sum} payoff;
\item average over the encountered costs (or rewards) along a
  play---this~is the so-called \emph{mean-payoff}.
\end{itemize}
Those payoff functions have been extensively studied in the
literature; discounted-sum and mean-payoff have been shown to admit
optimal memoryless and deterministic strategies, which can be computed
using linear programming, yielding a polynomial-time
algorithm. Alternative methods, such as value iteration or policy
improvement, can be used in practice. On~the other~hand, the
\emph{shortest-path} problem, which aims at minimizing the
truncated-sum, has been fully understood only recently~\cite{BBD+18}:
one can decide in polynomial time as well whether the shortest-path is
finite, or~whether it is equal to $+\infty$ (if~one cannot ensure
reaching the target almost-surely), or whether it can be smaller than
any arbitrary small number (if~a~negative loop can be enforced---note
that in the context of stochastic systems, such a statement may be
misleading, but it corresponds to a rough intuition), and
corresponding strategies can be computed.

\paragraph{Multi-constrained problems in MDPs.}
The paradigm of multi-constrained objectives in stochastic systems in
general, and in MDPs in particular, has recently arisen. It~allows to
express various (quantitative or qualitative) properties over the
model, and to synthesize strategies accordingly. This~new field of
research is very rich and ambitious, with~various types of objective
combinations (see~for~instance~\cite{BDK14,RRS15b} for~recent
overviews). For recent developments on MDPs, one can cite:
\begin{itemize}
\item Pareto curves, or percentile queries, of multiple quantitative
  objectives: given several payoff functions, evaluate
  which tradeoff can be made between the probabilities, or the
  expectations, of~the various payoff
  functions. In~\cite{chatterjee07b,BBC+14,CKK17}, solutions based on
  linear programming are provided for mean-payoff
  objectives. The~percentile-query problem for
  various quantitative payoff functions is studied in~\cite{RRS17}.
\item probability of conjunctions of objectives: given several payoff
  functions, evaluate the probability that all
  constraints are satisfied. This problem is studied in~\cite{HK15} for
  reachability (that is, for the truncated-sum payoff function),
  a \PSPACE lower bound is proved for that
  problem, already with a single payoff function.
\item the ``beyond worst-case'' paradigm: satisfy
  both a safety constraint on all outcomes, and various performance
  criteria. Variations of this problem for various
  payoff functions have been studies in~\cite{CR15,BFRR17,BRR17}.
\item conditional expectations~\cite{BKKW17} or conditional
  values-at-risk~\cite{KM18}, which measures likelihoods of properties
  under some assumptions on the system, have  recently been
  investigated.
\end{itemize}

\paragraph{Our contributions.}
% We model the problem arising from the EV-charging problem as
% modelled in~\cite{GBBLM17} into a multi-constrained problem
% over~MDPs.
The~general multi-constrained problem, arising from the EV-charging
problem as modelled in~\cite{GBBLM17}, takes as an input an MDP with
two weights, $w_1$~and~$w_2$, and~requires the existence
(and~synthesis) of~a~strategy ensuring that some (absorbing) target
state be reached, with a percentile constraint on the truncated sum
of~$w_1$ (lower bound parameterized by~$\epsilon$), and an expectation
constraint on the truncated sum of~$w_2$. The~initial EV-charging
problem corresponds to the instance of that problem when
${\epsilon=0}$, where $w_1$ represents the energy that is used for
charging and $w_2$ represents the ageing of the transformer.

As defined above, our problem integrates both the ``beyond
worst-case'' paradigm of~\cite{BFRR17}, and percentile queries as
in~\cite{CKK17} (mixing probabilities and expectations).  While
in~\cite{CKK17} linear programs are used for solving percentile
queries (heavily relying on the fact that mean-payoff objectives are
tail objectives), we need different techniques since the truncated-sum
payoff is very much prefix-dependent; actually, the \PSPACE-lower
bound of~\cite{HK15} immediately applies here as~well (even~without a
constraint on the expectation of~$w_2$). We~develop here a methodology
to describe the cartography of the problem, that is, the~set of values
of the parameter~$\epsilon$ for which the problem has a solution.
% It uses two sequences of optimization problems, allowing to
% partially characterize the cartography.
Our~approach is based on two sequences of optimization problems which,
in some cases we characterize, allow to have the (almost) full
picture. We~then discuss computability issues.
%
%By~lack of space, some proofs could not be included in the paper, and
%are given in the eponymous research report. {\color{red}\fbox{OK??}}

\section{Preliminary definitions}

Let $S$ be a finite set. We write $\Dist(S)$ for the set of
distributions over~$S$, that~is, the set of functions
$\delta \colon S \to [0,1]$ such that $\sum_{s \in S} \delta(s)=1$.
A~distribution over~$S$ is \emph{Dirac} if $\delta(s)=1$ for
some~$s\in S$.

\subsection{Definition of the model}

In this paper, we mainly focus on doubly-weighted Markov decision
processes, but the~technical developments mainly rely on simply-weighted
Markov decision processes. We~therefore define the setting
with an arbitrary number of weights.

\begin{definition}
  Let $k \in \mathbb{N}$.  A~\emph{$k$-weighted Markov decision
    process} (\emph{$k$w-MDP}) is a tuple $\calM =
  (S,s_\init,\Goal,\penalty1000 E,(w_i)_{1 \le i \le k})$, where:
  \begin{itemize}
  \item $S$ is a finite set of states;
  \item $s_\init \in S$ is the initial state;
  \item $\Goal \in S$ is the target state;
  \item $E \subseteq S \times \Dist(S)$ is a finite set of stochastic
    edges;
  \item for each $1 \le i \le k$, the function $w_i \colon S \times S
    \to \mathbb{Q}$ assigns a rational weight to
    each transition of the complete graph with state space~$S$. 
  \end{itemize}
\end{definition}
A (finite, infinite) path in~$\calM$ from~$s$ is a (finite, infinite)
sequence of states $s_0 s_1 s_2 \dots$ such that $s_0=s$ and for
every~$i$, there is $\delta_i \in \Dist(S)$ such that $(s_i,\delta_i)
\in E$ and $\delta_i(s_{i+1})>0$. Finite paths are equivalently called
histories. We~write~$\Paths^\calM(s)$ (resp.~$\Paths^\calM_\infty(s)$)
for the set of paths (resp.~infinite paths), in~$\calM$ from
state~$s$. Given a history $h=s_0 s_1 s_2 \dots s_N$ and $\ell\leq N$,
the~$w_i$-accumulated weight of~$h$ after $\ell$ steps is defined~as
$\Acc_{w_i}^\ell(h) = \sum_{j=1}^\ell w_i(s_{j-1},s_j)$. This notion
extends straightforwardly to infinite paths.

A (randomized) \emph{strategy} in~$\calM$ is a function $\sigma$
assigning to every history $h = s_0 s_1 s_2 \dots s_N$ a distribution
over $s_N E = \{\delta \in\Dist(S) \mid (s_N,\delta) \in
E\}$. A~strategy~$\sigma$~is said to be \emph{pure} whenever the
distributions it prescribes are Dirac.
A~path~$s_0 s_1 s_2 \dots$ is an outcome of~$\sigma$ whenever for
every strict prefix $s_0 s_1 s_2 \dots s_N$, there exists $\delta \in s_N
E$ such that $\sigma(h)(\delta)>0$ and $\delta(s_{N+1})>0$. Basically,
the~outcomes of a strategy are the paths that are activated by the
strategy. We~write $\out^\calM(\sigma,s)$
(resp.~$\out_\infty^\calM(\sigma,s)$) for the set of finite
(resp. infinite) outcomes of~$\sigma$ from state~$s$.

Given a strategy $\sigma$ and a state~$s$, we~denote with
$\Prob_{\sigma,s}^\calM$ the probability distribution, according to
$\sigma$, over the infinite paths in~$\Paths^\calM_\infty(s)$, defined
in the standard way using cylinders based on finite paths from~$s$.
If $f$ is a measurable functions from $\Prob_{\sigma,s}^\calM$ to
$\bbR$, we denote by $\Esp_{\sigma,s}^\calM(f)$ the expected value of
$f$ w.r.t. the probability distribution $\Prob_{\sigma,s}^\calM$, that
is, $\Esp_{\sigma,s}^\calM(f) = \displaystyle \int f ~\textrm{d} \Prob_{\sigma,s}^\calM$.
In~all notations, we may omit to mention the superscript~$\calM$ when
it is clear in the context, and may omit to mention the starting
state~$s$ when it is~$s_\init$, so that $\Prob_\sigma$ corresponds to
$\Prob_{\sigma,s_{\init}}^\calM$.

\begin{example}
  Consider the 2w-MDP $\calM$ of Figure~\ref{fig:ex}. It~has four
  states~$s_0$, $s_1$, $s_2$ and~$\Goal$, and five edges, labelled
  with their names (here~$a$, $b$, $c$, $d$ and~$e$). Weights label
  pairs of states (but are represented here only for pairs of states
  that may be activated).  Edges~$a$, $b$, $d$ and~$e$ have Dirac
  distributions, while edge~$c$ has a stochastic choice (represented
  by the small black square).  For~readability we do not write the
  exact distributions, but in this example, they are assumed to be
  uniform.
  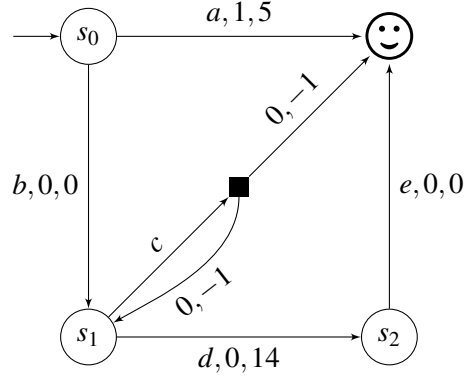
\begin{figure}[h]
    \centering
    \begin{tikzpicture}
      \draw (0,0) node [circle,draw] (s0) {$s_0$};
      \draw (4,0) node [circle,inner sep=-1mm] (t) {\scalebox{2.5}{$\Goal$}};
      \draw (0,-4) node [circle,draw] (s1) {$s_1$};
      \draw (4,-4) node [circle,draw] (s2) {$s_2$};
      
      \draw (2,-2) node [fill=black] (rand) {};

      \draw [-latex'] (rand) -- (t) node [midway,above,sloped]
      {$0,-1$};
      \draw [-latex'] (s0) -- (t) node [midway,above,sloped]
      {$a,1,5$};
      \draw [-latex'] (s0) -- (s1) node [midway,left] {$b,0,0$};
      \draw [-latex'] (s1) -- (rand) node[midway,above,sloped] {$c$};
      \draw [-latex'] (rand) .. controls +(-90:1cm) and +(30:1cm)
      .. (s1) node [midway,sloped,below] {$0,-1$};
      \draw [-latex'] (s1) -- (s2) node [midway,below] {$d,0,14$};
      \draw [-latex'] (s2) -- (t) node[midway,right]{$e,0,0$};

      \draw [latex'-] (s0) -- +(-1,0);
    \end{tikzpicture}
    \caption{An example of a 2w-MDP}
    \label{fig:ex}
  \end{figure}
\end{example}

\subsection{Payoff functions}

We are interested in quantitative reachability properties (also called
truncated-sum in the literature), which we formalize as follows.  Let
$\rho = s_0 s_1 s_2 \dots \in \Paths(s_0)$. We~use standard \LTL-based
notations for properties; for~instance, we~write
$\rho \models \F \Goal$ (resp.~$\rho \models \F[I] \Goal$, when $I$ is
an interval of $\bbN$) when there is $j$ (resp. $j \in I$) such that
$s_j = \Goal$, and $\rho \models \G \neg\Goal$ (resp.
$\rho \models \G[I] \neg\Goal$, when $I$ is an interval of $\bbN$)
when $s_j \ne \Goal$ for every~$j$ (resp.~for every~$j \in
I$). We will often use expressions $\sim N$ (in $\{<,\le,=,\ge,>\}
\times \bbN$ for defining intervals of $\bbN$.

If~$\rho\models \F\Goal$ and $1\leq i\leq n$, we define the $i$-th
payoff function $\TS^\Goal_{w_i}(\rho)$ as $\Acc_{w_i}^N(\rho)$ where
$N$ is the least index such that $\rho \models \F[=N]
\Goal$. If~$\rho\not\models \F\Goal$ then
$\TS^\Goal_{w_i}(\rho)=+\infty$.
The~function~$\rho \mapsto \TS^\Goal_{w_i}(\rho)$ is measurable, hence
for every $\bowtie \nu$ in
$\{\mathord<,\mathord\leq,\mathord=,
\mathord\geq,\mathord>\}\times\mathbb Q$,
$\Prob_{\sigma,s_0}(\{\rho \in \Paths_\infty(s_0) \mid
\TS_{w_i}^\Goal(\rho) \bowtie \nu \})$ (simply written as
$\Prob_{\sigma,s_0}(\TS_{w_i}^\Goal \bowtie \nu)$) and
$\Esp_{\sigma,s_0}(\TS_{w_i}^\Goal)$ are well-defined.  We~write
$\rho \models (\TS_{w_i}^\Goal \bowtie \nu)$ whenever
$\TS_{w_i}^\Goal(\rho) \bowtie \nu$.

\medskip
In the rest of the paper, we assume that $\Goal$ is a sink state,
and that there is a single loop on~$\Goal$ whose weights are all
equal to~$0$. This is w.l.o.g. since we will study payoff functions
$\TS^\Goal_{w_i}$, which only consider the prefix up to the first
visit to~$\Goal$.

\begin{example}
  Consider again the example of Figure~\ref{fig:ex}. Consider the
  strategy~$\sigma$ which selects $a$ or $b$ uniformly at random in
  $s_0$, and always selects~$c$ in~$s_1$. Then,
  \begin{xalignat*}3
    \Prob_{\sigma,s_0} \Big( \F \Goal \Big) &= 1\quad
    &
    \Prob_{\sigma,s_0} \Big( \TS_{w_1}^\Goal \ge 1 \Big) & = 
    \frac{1}{2}\quad
    &
    \Esp_{\sigma,s_0} \Big(\TS_{w_2}^\Goal\Big) &= \frac{1}{2}
    \cdot 5 + \frac{1}{2} \cdot \sum_{i=1}^\infty \frac{-i}{2^i}  = 1+\frac{1}{2}
  \end{xalignat*}
\end{example}

\section{The problem}

The problem that we tackle in this paper arises from a recent study
of an EV-charging problem~\cite{GBBLM17}. The~general
problem we will define is a relaxed version of the original problem,
combining several stochastic constraints (a~percentile query over some
payoff function and a constraint on the expectation of some payoff
function) together with a worst-case obligation. While
various payoff functions could be relevant, we focus on those payoff
functions that were used for the EV-charging problem, that is,
quantitative reachability (i.e., the truncated-sum payoff).  We will
see that the developed techniques are really specific to our choice of
payoff functions.

In this paper, we focus on a combination of \emph{sure reachability}
of the goal state, of a percentile constraint on the proportion of
paths having high value for the first payoff, and of a constraint on
the expected value of the second payoff. 

Let $\calM = (S,s_\init,\Goal,E,(w_1,w_2))$ be a
2w-MDP. Let $\nu_1,\nu_2 \in \mathbb{Q}$. For~every~$\epsilon \ge 0$,
we~define the problem $\problem_{\calM,\nu_1,\nu_2}(\epsilon)$ as
follows: there exists a strategy $\sigma_\epsilon$ such that
\begin{enumerate}
\item for all $\rho \in \out^\calM_{\infty}(\sigma_\epsilon,s_\init)$,
  it~holds $\rho \models \F \Goal$;
\item $\Prob^\calM_{\sigma_\epsilon,s_\init}\Big(\TS_{w_1}^{\Goal} \ge
  \nu_1\Big) \ge 1-\epsilon$;
\item $\Esp^\calM_{\sigma_\epsilon,s_\init}\Big(\TS_{w_2}^\Goal\Big) <
  \nu_2$.
\end{enumerate}
We aim at computing the values of $\epsilon$ for which
$\problem_{\calM,\nu_1,\nu_2}(\epsilon)$ has a solution.  For~the rest
of this section, we~assume that there is a strategy~$\sigma$ such that
${\Esp_{\sigma,s_\init}\Big(\TS_{w_2}^\Goal\Big) < \nu_2}$. Otherwise
$\problem_{\calM,\nu_1,\nu_2}(\epsilon)$ trivially has no solutions,
for~any~$\epsilon$. This can be decided using the algorithm recently
developed in~\cite{BBD+18}.

\begin{example}
  To illustrate the problem, we consider again the example given in
  Figure~\ref{fig:ex}. Consider $\epsilon = 0.5$, $\nu_1=1$ and
  $\nu_2=4.3$.  The only way to satisfy the threshold constraint
  on~$w_1$ is that at~least half of the paths use~$a$, impacting $2.5$
  over the expectation of~$w_2$. The~other paths have to go to~$s_1$,
  and then take $c$ for some time (provided the play goes back
  to~$s_1$) in~order to decrease the expectation of~$w_2$, before it
  becomes possible to take $d$ and then~$e$ (so~that the strategy is
  surely winning). This~strategy uses both randomization (at~$s_0$)
  and memory (counting the number of times $c$~is taken before $d$ can
  be taken).
\end{example}

We~call the \emph{cartography} of our problem the function which
associates to every $\epsilon \in [0;1]$, either \texttt{true} if
$\problem_{\calM,\nu_1,\nu_2}(\epsilon)$ has a solution, or
\texttt{false} otherwise. It~is easily seen that the cartography is a
threshold function, and can be characterized by an
interval~$I=\langle\gamma;1]$ (which may be left-open or left-closed): 
In~what follows, we~describe an algorithmic technique to approximate
this interval, and under additional conditions, to compute the
bound~$\gamma$. Whether the bound belongs to the interval
remains open in general.

%{%\color{red}
\paragraph{Link with the electric-vehicle (EV) charging problem.}
The (centralized) EV-charging problem consists in scheduling power
loads within a time interval $[0;T]$ ($T$~being a fixed time bound)
with uncertain exogenous loads, so as to minimize the impact of
loading on the electric distribution network (measured through the ageing of
the  transformer, which depends on the temperature of its winding).
Following standard models, time is discretized,
and the instantaneous energy consumption at time~$t$ can be written as
the sum of the non-flexible load $\ell_{t}^{\nonflex}$ (consumption
outside~EV) and the flexible load $\ell_t^{\flex}$, corresponding to
the EV charging.  The~flexible loads at each time are controllable actions,
while the non-flexible part is known, or
statistically estimated using past databases.

A first constraint on the transformer is given by its capacity:
$\ell_t^{\flex} + \ell_{t}^{\nonflex} \le \textsf{L}^{\max}$ (where
$\textsf{L}^{\max}$ is a constant) for every $0 \le t \le T$.
A~second constraint represents the charge required for charging
all vehicles on schedule:
$\sum_{t=0}^T\ell_{t}^{\flex}\geq \textsf{LoC}^{\max}$, where
$\textsf{LoC}^{\max}$ is a constant.
%, which represents the fact that
%the vehicles have been charged within the interval $[0;T]$.
The~flexible load~$\ell_{t}^{\flex}$ can thus be seen as a weight function~$w_1$.

While greedy solutions can be used to solve the above constraints,
the~ageing of thansformer has not been taken into account so~far.
Using a standard model for the ageing of a transformer
(see~\cite{BLHM16,GBBLM17} for~details), it~can be expressed as a
weight function~$w_2$ based on a discrete model in which states
aggregate information on the system at the the two last
timepoints. Globally, a~2w-MDP~$\calM$
% (with appropriate states)
can be built, such that a controller for the EV-charging problem
coincides with a solution to
$\problem_{\calM, \textsf{LoC}^{\max},\nu_2}(0)$, for some bound
$\nu_2$ for the expected ageing of the transformer.
%}

\iffalse

The ageing of the transformer is modelled by the equation (see
\cite{At}) $A_t := e^{\alpha \theta^{\HS}_t+\beta}$ where
$\alpha>0>\beta$ are constants, and where $\theta^{\HS}_t$ is the
hot-spot of the transformer at time $t$. The temperature
$\theta^{\HS}_t$ depends on its value at time $t-1$, the ambient
temperature at time $t$, and consumptions at time $t$ and $t-1$,
$\ell_t$ and $\ell_{t-1}$, as follows:
\[
\theta^{\HS}_{t} :=  a\cdot\theta^{\HS}_{t-1} + b_1\cdot\ell_{t}^2 + b_2\cdot\ell_{t-1}^2 + c_{t}
\]
where $a\in[0,1]$, $b_1\geq 0\geq b_2$ are constants, and $c_{t}$ is a
known function of the ambient temperature, which ensures that
$\theta^{\HS}_{t}\geq 0$ for all $t$. A constraint on the hot-spot if
given by $\theta^{\HS}_t\leq \theta^{\max}$, where $\theta^{\max}$ is
a constant.

The cost of electricity for a user is then
supposed to be a linear combination of the ageimg of the transformer
and of the price to pay, that is globally:
\[
  \textsf{C} := \sum_{t=0}^T\left(\lambda\cdot A_t+(1-\lambda)\cdot
    p_t \cdot \ell_t\right)
\]

The EV-charging problem then consists in chosing appropriate values
for the $(\ell_t^{\flex})_{t=0}^T$ for minimizing \textsf{C} under the
various constraints that we have seen. In particular, the target
$\sum_t \ell^\flex_{i,t} \ge \textsf{LoC}_i^{\max}$ has to be reached.

\fi

\section{Approximated cartography}

We fix a 2w-MDP $\calM = (S,s_0,\Goal,E,w_1,w_2)$ and two thresholds
$\nu_1,\nu_2 \in \mathbb{Q}$.  We~introduce two simpler optimization
problems related to $\problem_{\calM,\nu_1,\nu_2}(\epsilon)$, from
which we derive informations on the good values of~$\epsilon$ for
which that problem has a solution.  As~we explain below, our~approach
is in general not complete. However, we~observe that the true part of
the~cartography of our problem is an interval of the
form~$\langle \gamma; 1]$; under some hypotheses, we~prove that our
approach allows to approximate arbitrarily 
% algorithm computes the exact
the bound~$\gamma$, but may not be able to decide if the interval is
left-open or left-closed.

\subsection{Optimization problems}

Let $N$ be an integer.  We~write~$\phi_N^+$ for the property $\F[\le
  N] \Goal \wedge \TS_{w_1}^\Goal \ge \nu_1$ (which specifies that the
target is reached in no more than $N$~steps, with a $w_1$-weight
larger than or equal to~$\nu_1$), and $\phi_N^-$ for the property
$\F[\le N] \Goal \wedge \TS_{w_1}^\Goal < \nu_1$ (which means that the
target is reached in no more than $N$~steps, with a $w_1$-weight
smaller than~$\nu_1$). We write $\psi_N$ for the property $\G[\le N]
\neg \Goal$ (the~target is not reached during the $N$ first steps).
By~extension, we write $\phi^+$, $\phi^-$ and $\psi$ for the
properties $\F \Goal \wedge \TS_{w_1}^\Goal \ge \nu_1$, $\F \Goal
\wedge \TS_{w_1}^\Goal < \nu_1$ and $\G \neg \Goal$.
Finally, we~may also (abusively) use such formulas to denote
the set of paths that satisfy~them.

For~every~$N$ and every path~$\rho$ of~$\calM$ of length at least~$N$,
it~holds that: \( \rho \models \phi_N^+ \vee \phi_N^- \vee \psi_N \).
Moreover, observe that $\phi_N^+\subseteq \phi^+$ and
$\phi^+\subseteq \phi_N^+\vee \psi_N$.  As~a consequence,
for~every~$N$ and every strategy~$\sigma$,
\( \Prob_\sigma(\phi_N^+) \le \Prob_\sigma(\phi^+) \le
\Prob_\sigma(\phi_N^+\vee \psi_N) \).

\subsubsection{First optimization problem}

We~define
\[
  \overline{\val}_N = \inf \Big\{\Prob_\sigma \Big(\phi_N^- \vee
  \psi_N\Big) \mid \sigma\ \text{s.t.}\
  \Esp_{\sigma}\Big(\TS_{w_2}^\Goal\Big) < \nu_2\Big\}
\]
and for every $\alpha>0$, we~fix a witnessing
strategy~$\sigma_{N,\alpha}$ for $\overline{\val}_N$ up to~$\alpha$ (i.e.
$\Esp_{\sigma_{N,\alpha}}(\TS_{w_2}^\Goal) < \nu_2$ and
$\Prob_{\sigma_{N,\alpha}} (\phi_N^- \vee \psi_N) \leq \overline{\val}_N+\alpha$).

\begin{remark}
  Note that, since we assume that there is a strategy $\sigma$ such
  that $\Esp_{\sigma}(\TS_{w_2}^\Goal) < \nu_2$, the
  constraint of this optimization problem is non-empty.
  Note also that if $\sigma$ is a strategy such that
  $\Esp_{\sigma}(\TS_{w_2}^\Goal) < \nu_2$, then
  $\Prob_\sigma( \F \Goal) = 1$, since for every
  path~$\rho$, $\TS_{w_2}^\Goal(\rho) = +\infty$ whenever $\rho
  \not\models \F \Goal$.
\end{remark}

\ifproof
\begin{restatable}{lemma}{lemmatwo}
For every $N$, it~holds $\overline\val_{N+1} \le \overline\val_N$.
\end{restatable}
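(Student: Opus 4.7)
The plan is to reduce the inequality to a simple monotonicity observation: the constraint set over which the infimum is taken does not depend on $N$, so it suffices to show that for every strategy $\sigma$ in that set, $\Prob_\sigma(\phi_{N+1}^- \vee \psi_{N+1}) \le \Prob_\sigma(\phi_N^- \vee \psi_N)$, which I would obtain via the set-theoretic inclusion
\[
\phi_{N+1}^- \vee \psi_{N+1} \ \subseteq\ \phi_N^- \vee \psi_N
\]
at the level of paths.

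First I would check the easy inclusion $\psi_{N+1} \subseteq \psi_N$: if $\rho$ avoids $\Goal$ for the first $N+1$ steps, then \emph{a fortiori} it avoids $\Goal$ for the first $N$ steps. Then I would handle $\phi_{N+1}^- \subseteq \phi_N^- \vee \psi_N$ by a case split on the first hitting time of $\Goal$: take $\rho \models \phi_{N+1}^-$, so $\Goal$ is reached at some step $k \le N+1$ with $\TS_{w_1}^\Goal(\rho) < \nu_1$. If $k \le N$, then $\rho \models \phi_N^-$; if $k = N+1$, then no prefix of length $\le N$ hits $\Goal$, hence $\rho \models \psi_N$. This gives the inclusion, and since probability is monotone with respect to event inclusion, we conclude $\Prob_\sigma(\phi_{N+1}^- \vee \psi_{N+1}) \le \Prob_\sigma(\phi_N^- \vee \psi_N)$ for every admissible $\sigma$.

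To finish, I would observe that the constraint $\Esp_\sigma(\TS_{w_2}^\Goal) < \nu_2$ defining the feasible set of strategies is independent of $N$, so taking the infimum on both sides over the same set of strategies preserves the inequality, yielding $\overline{\val}_{N+1} \le \overline{\val}_N$.

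The only subtlety I see is being careful about the convention for the length of a path reaching $\Goal$ in $\phi_N^-$ (the least index $N$ with $s_N = \Goal$, as defined in the paper), but once that convention is fixed, the case split above goes through cleanly. No new tools are needed: the argument is purely set-theoretic plus monotonicity of $\Prob_\sigma$ and of $\inf$.
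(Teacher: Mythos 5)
Your proof is correct and follows essentially the same route as the paper: both arguments rest on the inclusions $\psi_{N+1}\subseteq\psi_N$ and $\phi_{N+1}^-\subseteq\phi_N^-\vee\psi_N$, the only cosmetic difference being that the paper passes through an $\alpha$-optimal witness strategy $\sigma_{N,\alpha}$ and lets $\alpha\to 0$, whereas you compare the two probabilities pointwise over the (common, $N$-independent) feasible set and invoke monotonicity of the infimum directly.
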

\fi

\ifproof
\begin{proof}
  Fix some $\alpha>0$, and consider strategy~$\sigma_{N,\alpha}$.
  Clearly, 
  \(
  \phi_{N+1}^- \subseteq \psi_N \vee \phi_N^-
  \),
  and
  \(
  \psi_{N+1} \subseteq \psi_N
  \).
  Hence
  \begin{equation*}
    \overline\val_{N+1} \leq \Prob_{\sigma_{N,\alpha}}
    \Big(\phi_{N+1}^- \vee 
    \psi_{N+1}\Big) \leq \Prob_{\sigma_{N,\alpha}} 
    \Big(\phi_{N}^- \vee \psi_{N}\Big)
    \leq \overline\val_N+\alpha.
  \end{equation*}
  This allows to conclude that $\overline\val_{N+1} \le
  \overline\val_N$.
\end{proof}
\fi

\ifproof\else
It is not hard to see that the sequence
$(\overline\val_{N})_{N\in\bbN}$ is non-increasing (see~Appendix).
\fi
We let $\overline\gamma =
\lim_{N \to +\infty} \overline\val_{N}$.  We~then have:

\begin{restatable}{lemma}{lemmathree}
\label{lemma-eps<gamma}
  For every $\epsilon < \overline\gamma$,
  $\problem_{\calM,\nu_1,\nu_2}(\epsilon)$ has no solution.
\end{restatable}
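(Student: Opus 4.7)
The plan is to argue by contradiction: assume that for some $\epsilon<\overline\gamma$ there exists a strategy $\sigma_\epsilon$ witnessing $\problem_{\calM,\nu_1,\nu_2}(\epsilon)$, and derive that $\overline\gamma\le\epsilon$. The key observation is that by condition~3 of the problem, $\sigma_\epsilon$ is admissible in the infimum defining $\overline\val_N$ for every $N$, so $\overline\val_N\le\Prob_{\sigma_\epsilon}(\phi_N^-\vee\psi_N)$ for all~$N$. The whole argument will therefore reduce to giving a good upper bound on $\Prob_{\sigma_\epsilon}(\phi_N^-\vee\psi_N)$ and then letting $N\to\infty$.

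First I would exploit condition~1 (sure reachability): since every outcome of~$\sigma_\epsilon$ reaches~$\Goal$, the set $\psi=\G\neg\Goal$ has $\Prob_{\sigma_\epsilon}$-measure zero. Combined with the observation that, by definition of $\TS_{w_1}^\Goal$, a non-reaching path satisfies $\TS_{w_1}^\Goal=+\infty\ge\nu_1$, condition~2 rewrites as $\Prob_{\sigma_\epsilon}(\phi^+)+\Prob_{\sigma_\epsilon}(\psi)\ge 1-\epsilon$, and hence $\Prob_{\sigma_\epsilon}(\phi^+)\ge 1-\epsilon$. Since the events $\phi^+$, $\phi^-$, $\psi$ partition $\Paths_\infty(s_\init)$, this gives $\Prob_{\sigma_\epsilon}(\phi^-)\le\epsilon$.

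Next I would combine this with the obvious inclusion $\phi_N^-\subseteq\phi^-$ (a path reaching $\Goal$ in at most $N$ steps with weight ${<}\nu_1$ a fortiori reaches~$\Goal$ with weight ${<}\nu_1$), and apply the union bound to write
\[
  \Prob_{\sigma_\epsilon}(\phi_N^-\vee\psi_N)\le\Prob_{\sigma_\epsilon}(\phi^-)+\Prob_{\sigma_\epsilon}(\psi_N)\le\epsilon+\Prob_{\sigma_\epsilon}(\psi_N).
\]
Finally, I would invoke continuity of probability from above applied to the decreasing sequence $(\psi_N)_{N\in\bbN}$, whose intersection is~$\psi$: since $\Prob_{\sigma_\epsilon}(\psi)=0$, we have $\Prob_{\sigma_\epsilon}(\psi_N)\to 0$. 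Plugging this in gives $\overline\val_N\le\epsilon+\Prob_{\sigma_\epsilon}(\psi_N)$ for all~$N$, and taking the limit yields $\overline\gamma\le\epsilon$, contradicting the choice of~$\epsilon$.

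The steps above are all routine; the only subtlety to be careful about is the somewhat confusing interaction between condition~2 and the convention $\TS_{w_1}^\Goal(\rho)=+\infty$ on non-reaching paths — one must use condition~1 (sure reachability) to convert the percentile constraint into a genuine bound on $\Prob_{\sigma_\epsilon}(\phi^-)$. The passage to the limit then relies only on the standard monotone continuity of probability measures together with the fact that $\bigcap_N \psi_N=\psi$, which follows directly from the definitions.
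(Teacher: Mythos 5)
Your proof is correct, but it reaches the contradiction by a genuinely different mechanism than the paper. The paper exploits \emph{sure} reachability (condition~1) via K\"onig's lemma to extract a single finite horizon $N_\epsilon$ by which \emph{every} outcome of $\sigma_\epsilon$ has reached $\Goal$; at that horizon $\Prob_{\sigma_\epsilon}(\psi_{N_\epsilon})=0$ and $\phi_{N_\epsilon}^+$ coincides with $\phi^+$ under $\sigma_\epsilon$, so one gets the exact finite-index bound $\overline\val_{N_\epsilon}\le\epsilon$ and the contradiction is immediate from $\overline\gamma\le\overline\val_{N_\epsilon}$. You instead bound $\Prob_{\sigma_\epsilon}(\phi_N^-\vee\psi_N)\le\Prob_{\sigma_\epsilon}(\phi^-)+\Prob_{\sigma_\epsilon}(\psi_N)\le\epsilon+\Prob_{\sigma_\epsilon}(\psi_N)$ for every $N$ and let $N\to\infty$, using continuity from above of the measure along the decreasing sequence $(\psi_N)_N$ with $\bigcap_N\psi_N=\psi$ and $\Prob_{\sigma_\epsilon}(\psi)=0$. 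Both arguments are sound (your reduction of condition~2 to $\Prob_{\sigma_\epsilon}(\phi^-)\le\epsilon$, handling the $+\infty$ convention on non-reaching paths, is handled correctly). Your route is more elementary in that it avoids K\"onig's lemma and would go through verbatim if condition~1 were weakened to almost-sure reachability; the price is that you only control the limit $\overline\gamma$, whereas the paper's uniform-horizon argument yields $\overline\val_{N_\epsilon}\le\epsilon$ at a concrete finite index --- a stronger fact that the paper reuses later to show that $(\overline\val_N)_N$ is stationary whenever $\problem_{\calM,\nu_1,\nu_2}(\overline\gamma)$ has a solution.
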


\begin{proof}
  Fix $\epsilon < \overline\gamma$, and assume towards a contradiction
  that $\problem_{\calM,\nu_1,\nu_2}(\epsilon)$ has a solution.

  Fix a winning strategy~$\sigma_\epsilon$  for
  $\problem_{\calM,\nu_1,\nu_2}(\epsilon)$. By~the first winning
  constraint, there exists $N_\epsilon$ such that any outcome
  $\rho \in \out_\infty(\sigma_\epsilon)$ satisfies $\F[\le N_\epsilon]
  \Goal$ (thanks to K\"onig's lemma). Furthermore, since
  $\Esp_{\sigma_\epsilon}(\TS_{w_2}^\Goal) < \nu_2$, then
  $\sigma_\epsilon$ belongs to the domain
  of the optimization problem defining~$\overline\val_{N_\epsilon}$.
  Hence, we have 
  \[
    \overline\val_{N_\epsilon} \le  
    \Prob_{\sigma_\epsilon}\Big(\phi_{N_\epsilon}^- \vee
    \psi_{N_\epsilon}\Big) 
     =  1-
    \Prob_{\sigma_\epsilon}\Big(\phi_{N_\epsilon}^+\Big) 
     =  1-
    \Prob_{\sigma_\epsilon}\Big(\phi^+\Big)
     \le  \epsilon.
  \]
  This is then a contradiction with $\epsilon < \overline\gamma \le
  \overline\val_{N_\epsilon}$. Hence, we deduce that for every
  $\epsilon < \overline\gamma$,
  $\problem_{\calM,\nu_1,\nu_2}(\epsilon)$ has no solution.
\end{proof}

\begin{restatable}{lemma}{lemmafour}
\label{lemma:sol}
  For every $N$, for every $\epsilon>\overline\val_N\geq\overline\gamma$,
  $\problem_{\calM,\nu_1,\nu_2}(\epsilon)$ has a solution.
\end{restatable}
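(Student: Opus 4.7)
The plan is to exhibit $\sigma_{N,\alpha}$ (for a suitably small $\alpha$) as the witness. First pick $\alpha>0$ with $\overline\val_N+\alpha<\epsilon$, which is possible since $\epsilon>\overline\val_N$, and consider the strategy $\sigma_{N,\alpha}$ coming with the definition of $\overline\val_N$. By construction $\Esp_{\sigma_{N,\alpha}}(\TS_{w_2}^\Goal)<\nu_2$, giving condition~(3) of $\problem_{\calM,\nu_1,\nu_2}(\epsilon)$ at once.

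For condition~(2), I would exploit the fact that every infinite path satisfies exactly one of $\phi_N^+$, $\phi_N^-$, $\psi_N$, so that
\[
\Prob_{\sigma_{N,\alpha}}(\phi_N^+)=1-\Prob_{\sigma_{N,\alpha}}(\phi_N^-\vee\psi_N)\ge 1-(\overline\val_N+\alpha)>1-\epsilon.
\]
Since $\phi_N^+\subseteq\phi^+\subseteq\{\TS_{w_1}^\Goal\ge\nu_1\}$ (as already observed before the first optimization problem), this immediately yields $\Prob_{\sigma_{N,\alpha}}(\TS_{w_1}^\Goal\ge\nu_1)\ge 1-\epsilon$, which is condition~(2).

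The main obstacle, and the only non-routine step, is condition~(1): every outcome of the strategy must reach $\Goal$. The hypothesis $\Esp_{\sigma_{N,\alpha}}(\TS_{w_2}^\Goal)<\nu_2$ only forces $\Prob_{\sigma_{N,\alpha}}(\F\Goal)=1$ (because $\TS_{w_2}^\Goal=+\infty$ off $\F\Goal$), which is almost-sure but not sure reachability. My plan to close this gap is to graft onto $\sigma_{N,\alpha}$ a surely-reaching strategy at every history of positive probability where $\Goal$ is not yet guaranteed; the outcomes already belonging to $\phi_N^+$ under $\sigma_{N,\alpha}$ are left untouched, so the percentile bound from condition~(2) survives verbatim. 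The delicate book-keeping is in condition~(3): the graft affects only the event $\phi_N^-\vee\psi_N$ whose probability is at most $\overline\val_N+\alpha$, so by shrinking $\alpha$ further (using the strict slack $\nu_2-\Esp_{\sigma_{N,\alpha}}(\TS_{w_2}^\Goal)>0$) and choosing a surely-reaching continuation of bounded expected $w_2$-weight, the additional contribution to $\Esp(\TS_{w_2}^\Goal)$ can be kept below that slack, preserving condition~(3). Controlling this grafted continuation cleanly, and showing that the resulting strategy still falls in the domain of the optimization problem defining $\overline\val_N$, is where I expect the proof to require the most care.
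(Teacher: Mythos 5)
Your overall architecture matches the paper's: start from a strategy $\sigma_N$ in the domain of the optimization problem with $\Prob_{\sigma_N}(\phi_N^-\vee\psi_N)<\epsilon$, observe that conditions (2) and (3) essentially come for free, and repair condition (1) by switching to an attractor strategy. Your treatment of condition (2) is correct and is exactly the paper's argument.

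However, the step you yourself flag as delicate --- preserving condition (3) after the graft --- is not just delicate but does not work as you describe it, and the missing idea is precisely the one the paper's proof is built around. Two problems. First, shrinking $\alpha$ does not shrink the probability of the event on which you graft: $\Prob_{\sigma_{N,\alpha}}(\phi_N^-\vee\psi_N)\ge\overline\val_N$ always, and $\overline\val_N$ may be close to $\epsilon$, hence close to~$1$. Second, and more importantly, the ``additional contribution'' to the expectation is not bounded by the (bounded) cost of the attractor continuation weighted by the probability of grafting: by cutting $\sigma_{N,\alpha}$ off you also forfeit whatever that strategy planned to accumulate afterwards, and on the surviving histories the conditional expectation of $\TS_{w_2}^\Goal-\Acc^N_{w_2}$ may be arbitrarily negative (in Figure~\ref{fig:ex}, repeatedly playing~$c$ is exactly how a strategy drives the expectation below~$\nu_2$). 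Grafting at depth~$N$ can therefore push the expectation above~$\nu_2$. The correct tunable parameter is not $\alpha$ but the depth $k\ge N$ at which you switch to the attractor: define $\sigma_N^k$ as ``play $\sigma_N$ for $k$ steps, then the attractor''. Since $\Esp_{\sigma_N}(\TS_{w_2}^\Goal)<\nu_2$ forces $\Prob_{\sigma_N}(\F\Goal)=1$, one has both $\Prob_{\sigma_N}(\G[\le k]\neg\Goal)\to 0$ and $\Esp_{\sigma_N}(\Acc^k_{w_2})\to\Esp_{\sigma_N}(\TS_{w_2}^\Goal)$ as $k\to+\infty$, and the law of total expectation gives $\Esp_{\sigma_N^k}(\TS_{w_2}^\Goal)\le\Esp_{\sigma_N}(\Acc^k_{w_2})+M\cdot\Prob_{\sigma_N}(\G[\le k]\neg\Goal)$, where $M$ bounds the attractor's accumulated $w_2$-weight; choosing $k$ large enough absorbs both error terms into the slack $\nu_2-\Esp_{\sigma_N}(\TS_{w_2}^\Goal)$. (A minor point: the grafted strategy need not lie in the domain of the optimization problem defining $\overline\val_N$; it only has to satisfy the three conditions of $\problem_{\calM,\nu_1,\nu_2}(\epsilon)$.)
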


\begin{proof}
  Let $N$ be an integer, and $\epsilon>\overline\val_N$. Let
  $\sigma_N$ be a strategy such that
  \[
  \Prob_{\sigma_N} \Big(\phi_{N}^- \vee \psi_{N}\Big) <
  \epsilon\ \text{and}\
  \Esp_{\sigma_N}\Big(\TS_{w_2}^\Goal\Big) < \nu_2
  \]

  Let $\sigma_{\text{Att}}$ be an attractor (memoryless) strategy on
  $\calM$, that is, a strategy which enforces reaching~$\Goal$~; write
  $M$ for a positive upper-bound on the accumulated weight~$w_2$ when
  playing that strategy (from any state).  For $k \ge N$, define
  $\sigma_N^k$ as: play $\sigma_N$ for the first $k$~steps, and if
  $\Goal$ is not reached, then play $\sigma_{\text{Att}}$. We show
  that we can find $k$ large enough such that this strategy is a
  solution to $\problem_{\calM,\nu_1,\nu_2} (\epsilon)$.

  The first condition is satisfied, since either the target state is
  reached during the $k$ first steps (i.e. while playing $\sigma_N$),
  or it will be surely reached by playing $\sigma_{\text{Att}}$. Since
  $\Prob_{\sigma_N^k} (\phi_{N}^+) = 1 - \Prob_{\sigma_N^k}
  (\phi_{N}^- \vee \psi_{N})$ and
  $\Prob_{\sigma_N^k} (\phi_{N}^+) \le \Prob_{\sigma_N^k} (\phi^+)$,
  it is the case that $\Prob_{\sigma_N^k} (\phi^+) \ge 1-\epsilon$,
  which is the second condition for being a solution to
  $\problem_{\calM,\nu_1,\nu_2}(\epsilon)$. Finally, thanks to the law
  of total expectation, we can write:
  \begin{xalignat*}1 \Esp_{\sigma_N^k}\Big(\TS_{w_2}^\Goal\Big) & =
    \Esp_{\sigma_N^k}\Big(\TS_{w_2}^\Goal \mid \F[\le k] \Goal\Big)
    \cdot \Prob_{\sigma_N^k}(\F[\le k] \Goal) +
    \Esp_{\sigma_N^k}\Big(\TS_{w_2}^\Goal \mid \G[\le k] \neg
    \Goal\Big) \cdot \Prob_{\sigma_N^k}(\G[\le k]
    \neg\Goal) \\
    \noalign{\allowbreak} & \le \Esp_{\sigma_N^k}\Big(\Acc^k_{w_2}
    \mid \F[\le k] \Goal\Big) \cdot \Prob_{\sigma_N^k}(\F[\le k]
    \Goal) + \Esp_{\sigma_N^k}\Big(\Acc^k_{w_2} + M \mid \G[\le k]
    \neg \Goal\Big) \cdot \Prob_{\sigma_N^k}(\G[\le k]
    \neg\Goal) \\
    & \tagcomment{since the global impact of playing the strategy
      $\sigma_{\text{Att}}$ is bounded by $M$} \\
    \noalign{\allowbreak} & = \Esp_{\sigma_N^k}\Big(\Acc^k_{w_2}\Big)
    + M \cdot \Prob_{\sigma_N^k}(\G[\le k]
    \neg\Goal) \\
    & \tagcomment{by linearity of expectation and the law of total
      expectation again} \\
    \noalign{\allowbreak} & = \Esp_{\sigma_N}\Big(\Acc^k_{w_2}\Big) +
    M \cdot \Prob_{\sigma_N}(\G[\le k]
    \neg\Goal) \\
    & \tagcomment{since $\sigma_N^k$ coincides with
      $\sigma_N$ on the $k$ first steps}
  \end{xalignat*}
  Now, since $\Esp_{\sigma_N}(\TS_{w_2}^\Goal)$ is
  finite, it is the case that $\Prob_{\sigma_N} (\F
  \Goal) = 1$. Hence:
  \begin{itemize}
  \item $\lim_{k \to +\infty}
    \Esp_{\sigma_N}(\Acc^k_{w_2}) =
    \Esp_{\sigma_N}(\TS_{w_2}^\Goal)$, and
  \item $\lim_{k \to +\infty} \Prob_{\sigma_N}(\G[\le k]
    \neg\Goal) =0$.
  \end{itemize}
  Let $\eta =
  \nu_2-\Esp_{\sigma_N}(\TS_{w_2}^\Goal)>0$. One can
  choose $k$ large enough such that
  \[
  \Bigl|\Esp_{\sigma_N}\Bigl(\Acc^k_{w_2}\Bigr) -
  \Esp_{\sigma_N}\Bigl(\TS_{w_2}^\Goal\Bigr)\Bigr| < \eta/2
  \quad\text{ and }\quad
  \Prob_{\sigma_N}\Bigl(\G[\le k] \neg\Goal\Bigr) <
  \eta/2M.
  \]
  We conclude that:
  \[
    \Esp_{\sigma_N^k}\Big(\TS_{w_2}^\Goal\Big) < 
    \Esp_{\sigma_N}\Big(\TS_{w_2}^\Goal\Big) + \eta 
     <  \nu_2.
  \]
  The~strategy~$\sigma_N^k$ therefore  witnesses the fact that problem
  $\problem_{\calM,\nu_1,\nu_2}(\epsilon)$ has a solution.
\end{proof}

\subsubsection{Second optimization problem}

We now define
\[
\underline{\val}_N = 
\inf \Big\{\Prob_\sigma \Big(\phi_N^-\Big) \mid \sigma\
\text{s.t.}\ \Esp_{\sigma}\Big(\TS_{w_2}^\Goal\Big) <
\nu_2\Big\}.
\]
Notice that for any~$N$, $\underline\val_N\leq \overline\val_N$.
For every $\alpha>0$, we fix a witness
strategy~$\widetilde\sigma_{N,\alpha}$ for $\underline{\val}_N$ up
to~$\alpha$ (so~that
${\Esp_{\widetilde\sigma_{N,\alpha}}(\TS_{w_2}^\Goal) < \nu_2}$ and
$\Prob_{\widetilde\sigma_{N,\alpha}} (\phi_N^-) \leq
\underline{\val}_N+\alpha$).

\ifproof
\begin{lemma}
  For every $N$, we~have $\underline\val_{N} \le \underline\val_{N+1}$.
\end{lemma}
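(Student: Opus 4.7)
The plan is to exploit the inclusion of events, which for $\underline\val_N$ goes in the opposite direction from what happened for $\overline\val_N$. Observe that the constraint set of strategies is identical for both $\underline\val_N$ and $\underline\val_{N+1}$: it is the set of strategies~$\sigma$ such that $\Esp_\sigma(\TS_{w_2}^\Goal)<\nu_2$. So it suffices to show a pointwise inequality between the objective functions on that common domain.

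For this, I would simply notice that $\phi_N^-\subseteq \phi_{N+1}^-$. Indeed, $\phi_N^-$ requires that $\Goal$ be reached within the $N$~first steps while the $w_1$-truncated sum stays below~$\nu_1$; any such path also reaches $\Goal$ within the $N+1$~first steps, and since $\Goal$ is assumed to be a sink state, its $w_1$-truncated sum (which only depends on the prefix up to the first visit to~$\Goal$) is unchanged, so the path is in~$\phi_{N+1}^-$. Consequently, for every strategy~$\sigma$ in the common domain, $\Prob_\sigma(\phi_N^-)\leq \Prob_\sigma(\phi_{N+1}^-)$.

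Taking the infimum over the common constraint set on both sides then yields $\underline\val_N\leq \underline\val_{N+1}$. Unlike the proof of the analogous result for~$\overline\val_N$, there is no need here to pick an $\alpha$-witnessing strategy~$\widetilde\sigma_{N+1,\alpha}$ and transfer it back to~$N$: the event inclusion points the right way for a direct argument. The only minor subtlety, which is really a sanity check rather than a genuine obstacle, is to make sure that the weight condition $\TS_{w_1}^\Goal<\nu_1$ is evaluated consistently for a path in~$\phi_N^-$ viewed as a path in~$\phi_{N+1}^-$; this is guaranteed by the standing assumption that $\Goal$ is a sink with a single zero-weight self-loop, ensuring that $\TS_{w_1}^\Goal$ does not depend on the suffix after the first visit to~$\Goal$.
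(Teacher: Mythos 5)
Your proof is correct and follows essentially the same route as the paper's: both rest on the inclusion $\phi_N^-\subseteq\phi_{N+1}^-$ and the resulting pointwise inequality over the common constraint set; the paper merely realizes the comparison of infima via an $\alpha$-near-optimal witness $\widetilde\sigma_{N+1,\alpha}$ instead of invoking monotonicity of the infimum directly, which is an inessential presentational difference.
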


\begin{proof}
  Fix some $\alpha>0$, and consider strategy
  $\widetilde\sigma_{N+1,\alpha}$.
  It~is clear that
  \(
  \phi_{N}^- \subseteq \phi_{N+1}^-
  \),
  hence
  \[
    \underline\val_{N} \le  \Prob_{\widetilde\sigma_{N+1,\alpha}}
    \Big(\phi_{N}^- \Big)  \le \Prob_{\widetilde\sigma_{N+1,\alpha}}
    \Big(\phi_{N+1}^- \Big) \leq  \underline{\val}_{N+1}+\alpha.
  \]
  We conclude that $\underline\val_{N} \le
  \underline\val_{N+1}$.
\end{proof}
\fi

\ifproof\else
This time, it can be observed that the sequence
$(\underline\val_N)_{N\in\bbN}$ is non-decreasing.
\fi
We let $\underline\gamma = \lim_{N \to +\infty} \underline\val_{N}$.
From the results and remarks above, we~have
$\underline\val_N\leq \underline\gamma\leq\overline\gamma$ for any~$N$.
From~Lemma~\ref{lemma-eps<gamma}, we~get:

\begin{restatable}{lemma}{lemmaten}
\label{lemma:underline}
  For any~$N$ and any $\epsilon<\underline\val_N\leq\underline\gamma$,
  $\problem_{\calM,\nu_1,\nu_2}(\epsilon)$ has no solution.
\end{restatable}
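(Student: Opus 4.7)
The plan is to argue by contradiction, following the template of the proof of Lemma~\ref{lemma-eps<gamma}, but in a simpler form: since $\underline\val_N$ is defined using only the event $\phi_N^-$ and not the auxiliary event $\psi_N$, no appeal to K\"onig's lemma will be needed. I~would fix $N$ and a rational $\epsilon < \underline\val_N$, and assume towards a contradiction that some strategy~$\sigma_\epsilon$ is a solution to $\problem_{\calM,\nu_1,\nu_2}(\epsilon)$.

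The third winning condition $\Esp_{\sigma_\epsilon}(\TS_{w_2}^\Goal) < \nu_2$ immediately places $\sigma_\epsilon$ in the domain of the optimization problem defining~$\underline\val_N$, so by definition of the infimum we~get $\underline\val_N \le \Prob_{\sigma_\epsilon}(\phi_N^-)$. Next, the~inclusion $\phi_N^- \subseteq \{\TS_{w_1}^\Goal < \nu_1\}$ holds directly by the definition $\phi_N^- = \F[\le N]\Goal \wedge \TS_{w_1}^\Goal < \nu_1$. Combining this with the second winning condition $\Prob_{\sigma_\epsilon}(\TS_{w_1}^\Goal \ge \nu_1) \ge 1 - \epsilon$ yields
\[
\Prob_{\sigma_\epsilon}(\phi_N^-) \le \Prob_{\sigma_\epsilon}(\TS_{w_1}^\Goal < \nu_1) = 1 - \Prob_{\sigma_\epsilon}(\TS_{w_1}^\Goal \ge \nu_1) \le \epsilon,
\]
hence $\underline\val_N \le \epsilon$, contradicting the standing assumption $\epsilon < \underline\val_N$. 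The case $\epsilon < \underline\gamma$ then follows at once from the inequality $\underline\val_N \le \underline\gamma$ already established in the text.

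There is no real obstacle here: the crux is simply the observation that, compared to Lemma~\ref{lemma-eps<gamma}, the event $\psi_N$ is absent from the definition of $\underline\val_N$, so that no horizon bound on $\sigma_\epsilon$ (and hence no K\"onig's lemma) is needed, and the argument collapses to the trivial set inclusion $\phi_N^- \subseteq \{\TS_{w_1}^\Goal < \nu_1\}$ combined with the percentile constraint on~$w_1$.
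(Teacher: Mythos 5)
Your argument is correct, but it takes a genuinely different route from the paper. The paper does not re-prove this lemma from scratch: it obtains it as an immediate consequence of Lemma~\ref{lemma-eps<gamma}, via the chain $\epsilon<\underline\val_N\leq\underline\gamma\leq\overline\gamma$ (the middle inequalities having been established from $\underline\val_N\leq\overline\val_N$ and monotonicity of the two sequences). Your proof is instead direct and self-contained: you place $\sigma_\epsilon$ in the domain of the optimization problem defining $\underline\val_N$ via the expectation constraint, and then use the inclusion $\phi_N^-\subseteq\{\TS_{w_1}^\Goal<\nu_1\}$ together with the percentile constraint to get $\underline\val_N\leq\Prob_{\sigma_\epsilon}(\phi_N^-)\leq\epsilon$, a contradiction. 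Your observation that the absence of $\psi_N$ makes K\"onig's lemma (and hence any horizon bound on $\sigma_\epsilon$) unnecessary is exactly right, and the complementation step $\Prob_{\sigma_\epsilon}(\TS_{w_1}^\Goal<\nu_1)=1-\Prob_{\sigma_\epsilon}(\TS_{w_1}^\Goal\geq\nu_1)$ is valid because $\TS_{w_1}^\Goal$ takes the value $+\infty$ on paths never reaching $\Goal$, so the two events do partition the path space. What each approach buys: the paper's derivation is a one-liner given the machinery already in place, whereas yours is more elementary and would survive even without Lemma~\ref{lemma-eps<gamma}. One small quibble: your closing remark that the case $\epsilon<\underline\gamma$ ``follows from $\underline\val_N\leq\underline\gamma$'' points the inequality the wrong way --- to cover every $\epsilon<\underline\gamma$ one needs that $\underline\val_N$ \emph{converges to} $\underline\gamma$ (so that some $N$ satisfies $\epsilon<\underline\val_N$), not merely that each $\underline\val_N$ is below $\underline\gamma$; but the lemma as stated only quantifies over $\epsilon<\underline\val_N$, so this does not affect correctness.
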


While the~status of $problem_{\calM,\nu_1,\nu_2}(\overline\gamma)$ is
in general unknown, we~still have the following properties:
\begin{restatable}{prop}{laseuleprop}
  \begin{itemize}
  \item If $\problem_{\calM,\nu_1,\nu_2}(\overline\gamma)$ has a
    solution, then the sequence $(\overline\val_N)_N$ is stationary
    and ultimately takes value~$\overline\gamma$. The converse need
    not hold in general;
  \item  $\underline\gamma = \overline\gamma$ does neither imply that
    $\problem_{\calM,\nu_1,\nu_2}(\overline\gamma)$ has a solution,
    nor that $\problem_{\calM,\nu_1,\nu_2}(\overline\gamma)$ has no solution.
  \end{itemize}
\end{restatable}

\ifproof
\begin{proof}
    Pick a solution $\sigma$ to
    $\problem_{\calM,\nu_1,\nu_2}(\overline\gamma)$ and an integer $N$
    corresponding to the size of the outcome tree of~$\sigma$ (due~to
    the fact that all outcomes of~$\sigma$ eventually hit the
    goal state, thanks to K\"onig's lemma, the tree is finite). Then, since
    no $\epsilon<\overline\gamma$ can be such that $\sigma$ is a
    solution to $\problem_{\calM,\nu_1,\nu_2}(\epsilon)$, one can show
    that:
    \[
    \Prob_\sigma \Big(\phi_N^+ \Big) =
    1-\overline\gamma\quad\text{and}\quad \Prob_\sigma
    \Big(\phi_N^- \vee \psi_N \Big) =\Prob_\sigma
    \Big(\phi_N^-\Big) =\overline\gamma
    \]
    hence $\overline\val_N \le \overline\gamma$. This proves
    that the sequence $(\overline\val_N)_N$ is
    stationary (equal to $\overline\gamma$ after some index~$N$).

    For the second claim, consider the 2w-MDP of
    Figure~\ref{fig-proofprop}, with $\nu_1 = 1$ and $\nu_2=2.1$.  For
    every $N$, $\underline\val_N=0$, but $\overline\val_N =
    1/2^N$. Both sequences converge to $0$, but
    $\problem_{\calM,\nu_1,\nu_2}(0)$ has no solution. Note that the
    sequence $(\overline\val_N)_N$ is not stationary.
  \end{proof}
     \begin{figure}[h]
      \centering
      \begin{tikzpicture}
        \draw (0,0) node [circle,draw] (s0) {$s_0$};
        \draw (2,0) node [fill=black] (rand) {};
        \draw (4,0) node [circle,inner sep=-1mm] (t) {\scalebox{2.5}{$\Goal$}};
        \draw (2,-2) node [circle,draw] (s1) {$s_1$};

        \draw [latex'-] (s0) -- +(-1,0);
        
        \draw (s0) -- (rand) node [midway,above] {$a$};
        \draw [-latex'] (rand) -- (t) node [midway,above] {$0,1$};
        \draw [-latex'] (rand) .. controls +(120:1cm) and +(60:2cm)
        .. (s0) node [midway,above] {$0,1$};
        \draw [-latex'] (s0) -- (s1) node [midway,above,sloped] {$b$} node [midway,below,sloped]
        {$-1,1$};
        \draw [-latex'] (s1) -- (t) node [midway,above,sloped] {$c$}
        node [midway,below,sloped] {$0,0$};
      \end{tikzpicture}
      \caption{A 2w-MDP~$\calM$ with no solutions to
        $\problem_{\calM,1,2.1}(0)$}
      \label{fig-proofprop}
    \end{figure}
\fi

\subsubsection{Summary}

Figure~\ref{fig-carto} summarizes the previous analysis.
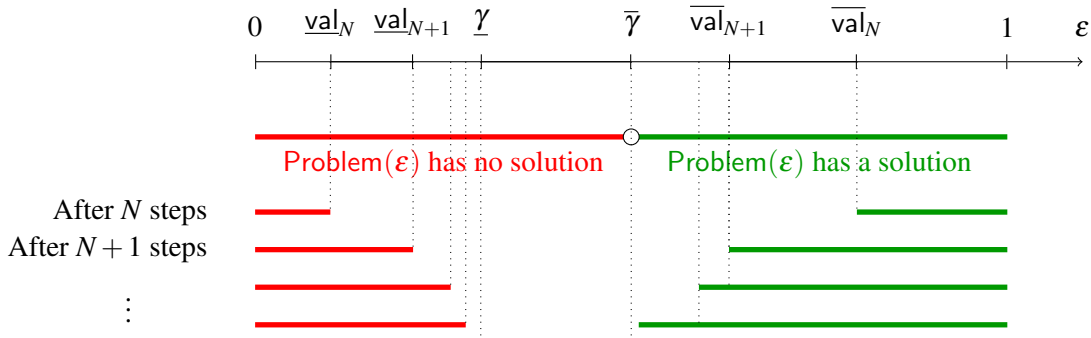
\begin{figure}[htb]
\centering
  \begin{tikzpicture}

    \path [use as bounding box] (10,1) -- (-2,-4);
    \draw [|-|] (0,0) node [above=2mm] {$0$} -- (10,0) node [above=2mm]
    {$1$};
    \draw [|-|] (3,0) node [above=2mm] {$\underline\gamma$} -- (5,0)
    node [above=2mm] {$\overline\gamma$};
    \draw [|-|] (1,0) node [above=2mm] {$\underline\val_N$} -- (2.1,0)
    node [above=2mm] {$\underline\val_{N+1}$};
    \draw [|-|] (6.3,0) node [above=2mm] {$\overline\val_{N+1}$} --
    (8,0) node [above=2mm] {$\overline\val_N$};
    \draw [->] (10,0) -- (11,0) node [above=2mm] {$\epsilon$};

    %% Theory
    
    \draw [green!60!black,line width=2pt] (5,-1) -- (10,-1) node [midway,below]
    {$\problem(\epsilon)$ has a solution};
    \draw [red,line width=2pt] (0,-1) -- (5,-1)  node [midway,below]
    {$\problem(\epsilon)$ has no solution};
    \draw [fill=white] (5,-1) circle (3pt);

    %% Practice N steps
    
    \path (-.5,-2) node [left] {After $N$ steps};
    \draw [green!60!black,line width=2pt] (8,-2) -- (10,-2);
    \draw [red,line width=2pt] (0,-2) -- (1,-2);
    % \draw [fill=white] (5,-1) circle (3pt);

    %% Practice N+1 steps
    
    \path (-.5,-2.5) node [left] {After $N+1$ steps};
    \draw [green!60!black,line width=2pt] (6.3,-2.5) -- (10,-2.5);
    \draw [red,line width=2pt] (0,-2.5) -- (2.1,-2.5);
    % \draw [fill=white] (5,-1) circle (3pt);

    \draw [green!60!black,line width=2pt] (5.9,-3) -- (10,-3);
    \draw [red,line width=2pt] (0,-3) -- (2.6,-3);

    \draw [green!60!black,line width=2pt] (5.1,-3.5) -- (10,-3.5);
    \draw [red,line width=2pt] (0,-3.5) -- (2.8,-3.5);

    \draw [dotted] (5,0) -- (5,-3.7);
    \draw [dotted] (3,0) -- (3,-3.7);

    \draw [dotted] (2.1,0) -- (2.1,-2.5);
    \draw [dotted] (1,0) -- (1,-2);
    \draw [dotted] (2.6,0) -- (2.6,-3);
    \draw [dotted] (2.8,0) -- (2.8,-3.5);
    
    \draw [dotted] (6.3,0) -- (6.3,-2.5);
    \draw [dotted] (8,0) -- (8,-2);
    \draw [dotted] (5.9,0) -- (5.9,-3.5);
    \draw [dotted] (6.3,0) -- (6.3,-3);

    \path (-1.5,-3.2) node [left] {$\vdots$};
  \end{tikzpicture}
  \caption{A partial cartography of our problem}
  \label{fig-carto}
\end{figure}
The picture seems rather complete, since only the status of
$\problem_{\calM,\nu_1,\nu_2}(\overline\gamma)$ remains
uncertain. However, it remains to discuss two things: first,
the~limits $\underline\gamma$ and $\overline\gamma$ are a priori
unknown, hence the cartography is not effective so~far. The~idea is
then to use the sequences $(\underline\val_N)_N$ and
$(\overline\val_N)_N$ to approximate the limits. We will therefore
discuss cases where the two limits coincide (we~then say that the
approach is \emph{almost-complete}), allowing for a converging scheme
and hence an algorithm to almost cover the interval $[0,1]$ with
either red (\emph{there are no solutions}) or green (\emph{there is a
  solution}), that is, to almost compute the full cartography of the
problem. Second, we~should discuss the effectiveness of the approach.

\section{Almost-completeness of the approach}

In this section, we discuss the almost-completeness of our approach,
and describe situations where one can show that $\underline\gamma =
\overline\gamma \stackrel{\text{def}}{=} \gamma$, which allows to
reduce the unknown part of the cartography to the
singleton~$\{\gamma\}$.

The situations for completeness we describe below are conditions over
cycles, either on weight~$w_1$ or on weight~$w_2$.
When we assume that cycles have
positive $w_i$-weights, we mean it for every cycle, except for cycles
containing~$\Goal$, which we assumed are self-loops with weight~$0$.

\subsection{When all cycles have a positive $w_2$-weight}

In this subsection, we assume that the $w_2$-weight of each cycle
of~$\calM$ is positive (this~is the case for instance when the
$w_2$-weight of each edge is~$1$, i.e., when $w_2$ counts the number
of steps). We~let~$n$ be the number of states of~$\calM$.

\begin{lemma}
  \label{lemma:count}
  There exists a constant~$\kappa\geq 0$ such that, 
  for any strategy~$\sigma$ satisfying
  \(
  \Esp_\sigma(\TS_{w_2}^\Goal) < \nu_2
  \),
  and any~$N>n$, it~holds:
  \[
  \Prob_\sigma\Big(\phi_N^- \vee \phi_N^+\Big) \geq
  1-\frac{n}{N-n}\cdot\kappa.
  \]
\end{lemma}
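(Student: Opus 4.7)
The plan is to bound the probability of the complementary event $\psi_N = \G[\le N]\neg\Goal$ by $\frac{n\kappa}{N-n}$, using a Markov-type inequality applied to the random variable $\TS^\Goal_{w_2}$. The intuition: since every simple cycle carries a strictly positive $w_2$-weight and there are only finitely many simple cycles, a path that avoids $\Goal$ for a long time must accumulate a large $w_2$-weight, and the constraint $\Esp_\sigma(\TS^\Goal_{w_2})<\nu_2$ then caps the probability of such paths.

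The first step is to establish a deterministic lower bound on $w_2$-weights accumulated along walks that avoid $\Goal$. Introduce two constants depending only on $\calM$: let $c$ be the minimum $w_2$-weight over the (finitely many) simple cycles of $\calM$ not containing $\Goal$, which is strictly positive by hypothesis, and let $W := \max_{s,s'}|w_2(s,s')|$. By the standard cycle-decomposition of walks, any finite walk of length $M$ among the (at most) $n-1$ non-goal states splits into simple cycles of total length at least $M-n+2$ (each of length at most $n-1$) plus a simple residual path of length at most $n-2$. Hence, for every $M\ge N>n$,
\[
\Acc^M_{w_2}(\rho)\;\ge\;c\cdot\frac{M-n+2}{n-1}-(n-2)W\;\ge\;\frac{c(N-n)}{n}-nW.
\]
In particular, taking $M$ to be the hitting time of $\Goal$ (or $+\infty$ if $\Goal$ is never reached) gives $\TS^\Goal_{w_2}(\rho)\ge c(N-n)/n - nW$ for every $\rho\models\psi_N$; applying the same decomposition with zero cycles also yields $\TS^\Goal_{w_2}(\rho)\ge -nW$ for every $\rho$ reaching $\Goal$.

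The second step is a Markov estimate. The hypothesis $\Esp_\sigma(\TS^\Goal_{w_2})<\nu_2<+\infty$ forces $\Prob_\sigma(\F\Goal)=1$, so the bound $\TS^\Goal_{w_2}\ge -nW$ holds $\Prob_\sigma$-almost surely. Setting $Y:=\TS^\Goal_{w_2}+nW$, we have $Y\ge 0$ almost surely and $Y\ge c(N-n)/n$ on $\psi_N$, whence
\[
\frac{c(N-n)}{n}\cdot\Prob_\sigma(\psi_N)\;\le\;\Esp_\sigma(Y)\;=\;\Esp_\sigma(\TS^\Goal_{w_2})+nW\;<\;\nu_2+nW.
\]
Rearranging, $\Prob_\sigma(\psi_N)<\dfrac{n(\nu_2+nW)}{c(N-n)}$, and choosing $\kappa:=(\nu_2+nW)/c$ (which is non-negative whenever the admissible set of strategies is non-empty, as otherwise the claim is vacuous) yields the announced inequality.

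The main technical obstacle is the cycle-decomposition bookkeeping: one has to pick constants that are simultaneously linear enough in $N$ to expose the $n/(N-n)$ factor, and tight enough to absorb the residual simple path of length $\le n-2$ together with the possibly negative final edge to $\Goal$. Once that uniform deterministic bound is in place, the rest is a direct Markov argument, made sound by the almost-sure reachability of $\Goal$ under the expectation constraint.
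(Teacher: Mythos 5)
Your proof is correct and follows essentially the same route as the paper's: a deterministic cycle-decomposition lower bound on the accumulated $w_2$-weight of paths satisfying $\psi_N$ (at most $n$ edges outside cycles, each cycle contributing at least the minimal positive cycle weight), combined with the expectation constraint to bound $\Prob_\sigma(\psi_N)$ by $\frac{n}{N-n}\cdot\kappa$. The only cosmetic difference is that you package the final step as a Markov inequality on the shifted variable $\TS^\Goal_{w_2}+nW$, whereas the paper writes it via the law of total expectation; the resulting constants ($\kappa=(\nu_2+nW)/c$ versus $(\nu_2-nW_2)/c_2$) agree up to the choice of $W$ versus the signed minimum weight.
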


\begin{proof}
  Assuming otherwise, the impact of all runs that do not belong to
  $\phi_N^- \vee \phi_N^+$ would be too large for the constraint
  on~$\TS_{w_2}^\Goal$.  Indeed, applying the law of total
  expectation, we~can write for every~$N>n$:
  \begin{xalignat*}1
    \Esp_\sigma^\calM\Big(\TS_{w_2}^\Goal\Big) & = 
    \Esp_\sigma^\calM\Big(\TS_{w_2}^\Goal \mid \F[\le N] \Goal \Big)
    \cdot \Prob_\sigma^\calM\Big(\F[\le N] \Goal\Big) +  
    \Esp_\sigma^\calM\Big(\TS_{w_2}^\Goal \mid \G[\le N] \neg \Goal \Big)
    \cdot \Prob_\sigma^\calM\Big(\G[\le N] \neg \Goal\Big)
  \end{xalignat*}
  Write $W_2$ for the minimal (possibly negative) $w_2$-weight
  appearing in~$\calM$, and $c_2$ for the minimal (positive by
  hypothesis) $w_2$-weight of cycles in~$\calM$.
  Noticing that, along any path,
  at most $n$ edges may be outside any cycle, we~get
  \[
  \Esp_\sigma^\calM\Big(\TS_{w_2}^\Goal \mid \F[\le N] \Goal \Big) \geq n\cdot W_2
  \]
  and
  \[
  \Esp_\sigma^\calM\Big(\TS_{w_2}^\Goal \mid \G[\le N] \neg \Goal \Big)
  \geq n\cdot W_2 + \frac {N-n}n \cdot c_2.
  \]
  We~get:
  \begin{xalignat*}1
  \Esp_\sigma^\calM\Big(\TS_{w_2}^\Goal\Big) & \geq
  n\cdot W_2\cdot \Bigl(\Prob_\sigma^\calM\Big(\F[\le N] \Goal\Big)+
  \Prob_\sigma^\calM\Big(\G[\le N] \neg \Goal\Big)\Bigr) +
  \frac{N-n}{n}\cdot c_2  \cdot 
  \Prob_\sigma^\calM\Big(\G[\le N] \neg \Goal\Big)
  \end{xalignat*}
  Since the left-hand side is strictly smaller than~$\nu_2$,
  we~get
  \[
  \Prob_\sigma^\calM\Bigl(\G[\le N] \neg \Goal\Bigr) =
  \Prob_\sigma^\calM\Bigl(\psi_N\Bigr) = 1-\Prob_\sigma^\calM\Bigl(\phi^+_N\vee \phi^-_N\Bigr)
  \leq \frac {n}{N-n}\cdot \Bigl[\frac{\nu_2-W_2\cdot n}{c_2}\Bigr].
  \]
\end{proof}

\begin{lemma}
  For any constant~$\kappa$ satisfying Lemma~\ref{lemma:count},
  and any $N>n$,  we~have
  \[
  0 \le \overline\val_N -\underline\val_N \le \frac{n}{N-n} \cdot \kappa.
  \]
\end{lemma}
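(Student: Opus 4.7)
The lower bound $0 \le \overline\val_N - \underline\val_N$ is immediate from the definitions: since $\phi_N^- \subseteq \phi_N^- \vee \psi_N$, every strategy $\sigma$ lying in the common feasible set (those with $\Esp_\sigma(\TS_{w_2}^\Goal) < \nu_2$) satisfies $\Prob_\sigma(\phi_N^-) \le \Prob_\sigma(\phi_N^- \vee \psi_N)$, so taking infima over the same domain gives $\underline\val_N \le \overline\val_N$.

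For the upper bound, the key idea is to bound the gap between the two random events $\phi_N^-$ and $\phi_N^- \vee \psi_N$ by $\Prob_\sigma(\psi_N)$, and then use Lemma~\ref{lemma:count} to control the latter uniformly over all feasible strategies. First I would observe that $\phi_N^-$ and $\psi_N$ are disjoint: along a path, one cannot simultaneously reach $\Goal$ within $N$ steps and not reach $\Goal$ within $N$ steps. Hence $\Prob_\sigma(\phi_N^- \vee \psi_N) = \Prob_\sigma(\phi_N^-) + \Prob_\sigma(\psi_N)$ for every $\sigma$. Moreover, $\Prob_\sigma(\psi_N) = 1 - \Prob_\sigma(\phi_N^- \vee \phi_N^+)$, so by Lemma~\ref{lemma:count}, any strategy with $\Esp_\sigma(\TS_{w_2}^\Goal) < \nu_2$ satisfies $\Prob_\sigma(\psi_N) \le \frac{n}{N-n}\cdot \kappa$ for $N>n$.

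Next, fix an arbitrary $\alpha>0$ and consider the witness strategy $\widetilde\sigma_{N,\alpha}$ for $\underline\val_N$. Since $\widetilde\sigma_{N,\alpha}$ lies in the feasible set, we can apply the two observations above to get
\[
\overline\val_N \le \Prob_{\widetilde\sigma_{N,\alpha}}\bigl(\phi_N^- \vee \psi_N\bigr) = \Prob_{\widetilde\sigma_{N,\alpha}}(\phi_N^-) + \Prob_{\widetilde\sigma_{N,\alpha}}(\psi_N) \le \underline\val_N + \alpha + \frac{n}{N-n}\cdot \kappa.
\]
Letting $\alpha \to 0$ yields the desired inequality.

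There is essentially no obstacle: the work has been done in Lemma~\ref{lemma:count}, and the remaining steps are combining disjointness of $\phi_N^-$ and $\psi_N$ with an $\alpha$-witness argument. The only point to double-check is that $\widetilde\sigma_{N,\alpha}$ indeed satisfies the $w_2$-expectation constraint (so that Lemma~\ref{lemma:count} applies), which holds by its very definition.
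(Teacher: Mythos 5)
Your proof is correct and follows essentially the same route as the paper: bound $\Prob_\sigma(\psi_N)$ uniformly via Lemma~\ref{lemma:count}, use disjointness of $\phi_N^-$ and $\psi_N$ to write $\Prob_\sigma(\phi_N^-\vee\psi_N)=\Prob_\sigma(\phi_N^-)+\Prob_\sigma(\psi_N)$, and pass to infima (the paper phrases the last step as taking infima on both sides, which is exactly what your explicit $\alpha$-witness argument implements).
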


\begin{proof}
  We~already remarked that $\underline\val_N \le
  \overline\val_N$.  Now, from Lemma~\ref{lemma:count}, for
  every strategy~$\sigma$ such that
  $\Esp_\sigma\Big(\TS_{w_2}^\Goal\Big) < \nu_2$, it~holds for
  any $N>n$ that
  \[
  \Prob_\sigma\Big(\psi_N\Big) < \frac{n}{N-n}\cdot\kappa.
  \]
  Hence, for any strategy~$\sigma$ and any $N>n$,
  \[
    \Prob_\sigma \Big( \phi_N^- \vee \psi_N\Big)  =
    \Prob_\sigma \Big( \phi_N^-\Big) +
    \Prob_\sigma \Big( \psi_N\Big) 
    \leq
    \Prob_\sigma\Big(\phi_N^-\Big) +\frac{n}{N-n}\cdot\kappa.
  \]
  Taking the infimum over $\sigma$, first in the left-hand side, and then
  in the right-hand side, we~get the expected bound.
\end{proof}

\begin{corollary}
  $\underline\gamma= \overline\gamma$.
\end{corollary}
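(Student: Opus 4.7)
The plan is to sandwich the two limits using the quantitative gap established in the preceding lemma. First, I would verify that both limits exist as finite real numbers: the sequence $(\overline\val_N)_N$ is non-increasing (shown earlier) and bounded below by $0$, and $(\underline\val_N)_N$ is non-decreasing and bounded above by~$1$, so both are convergent, defining $\overline\gamma$ and $\underline\gamma$ respectively. Moreover, since $\underline\val_N \le \overline\val_N$ for every~$N$, passing to the limit immediately yields $\underline\gamma \le \overline\gamma$.

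For the reverse inequality I would invoke the bound just proved: for every $N>n$,
\[
0 \le \overline\val_N - \underline\val_N \le \frac{n}{N-n}\cdot \kappa.
\]
Letting $N \to +\infty$, the right-hand side tends to~$0$, while the left-hand side tends to $\overline\gamma - \underline\gamma$ (both terms converge individually, so their difference converges to the difference of the limits). This yields $\overline\gamma - \underline\gamma \le 0$, which combined with the inequality of the previous paragraph gives $\underline\gamma = \overline\gamma$.

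There is essentially no obstacle here: the corollary is a direct squeeze argument built on top of the quantitative gap estimate and the monotonicity of the two sequences. The only mild subtlety worth stating explicitly is that the passage to the limit is legitimate because each of the two sequences converges on its own, so one may subtract the limits rather than only the sequences.
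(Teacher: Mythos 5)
Your proof is correct and is exactly the argument the paper intends (the corollary is stated without proof precisely because it is this immediate squeeze: the preceding lemma gives $0 \le \overline\val_N - \underline\val_N \le \frac{n}{N-n}\cdot\kappa \to 0$, and both monotone bounded sequences converge, so the limits coincide). Nothing is missing.
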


\begin{remark}
  Notice that the result does not hold without the assumption.
  Indeed, consider
  the 2w-MDP defined by the two deterministic edges $s
  \xrightarrow{a,0,0} s$ and $s \xrightarrow{b,-1,0} \Goal$, with
  $\nu_1=0$.  Then, for every $N$, $\underline\val_N = 0$, while
  $\overline\val_N=1$.
\end{remark}

\subsection{When all cycles have a positive $w_1$-weight}

We assume that each cycle of $\calM$ has a positive $w_1$-weight. We
first notice that:
\begin{lemma}\label{lemma-N0}
  There exists an integer $N_0$ such that for every path $\rho$ from
  $s_\init$ of length ${N \ge N_0}$ not visiting the goal state, it~holds
  \(
  \Acc_{w_1}^N(\rho) 
  \ge \nu_1
  \).
  In particular, if $\rho$ satisfies $\F[\ge N_0] (\neg \Goal \wedge
  \F \Goal)$, then $\TS_{w_1}^\Goal(\rho) \ge \nu_1$.
\end{lemma}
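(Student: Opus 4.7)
The plan is to exploit the positivity of cycles via a classical cycle-decomposition argument. Let $n = |S|$, let $W \geq 0$ be an upper bound on $|w_1(s,s')|$ over all pairs of states, and let $c_1 > 0$ be the minimum $w_1$-weight over the (finitely many) simple cycles of the underlying graph of $\calM$ other than the self-loop on $\Goal$; by our standing assumption on $\calM$, this constant $c_1$ is strictly positive.

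Given any finite path $\rho = s_0 s_1 \dots s_N$ from $s_\init$ avoiding $\Goal$, I iteratively extract simple cycles: whenever some state repeats in $\rho$, pick indices $i<j$ with $s_i = s_j$ and $j-i$ minimal, so that $s_i s_{i+1} \dots s_j$ is a simple cycle of length at most $n$, and splice it out to obtain a strictly shorter valid path whose accumulated $w_1$-weight equals the original one minus the (positive) weight of the extracted cycle. Iterating until no state repeats leaves a simple residual path of length at most $n-1$ (contributing at least $-(n-1)\cdot W$ to $\Acc_{w_1}^N(\rho)$) together with a multiset of simple cycles accounting for the remaining $N - \ell \geq N - (n-1)$ transitions. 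Since each such cycle has length at most $n$, at least $(N-n+1)/n$ of them were extracted, each contributing at least $c_1$. Hence
\[
\Acc_{w_1}^N(\rho) \;\geq\; -(n-1)\cdot W \;+\; \frac{N - n + 1}{n}\cdot c_1,
\]
a bound affine in $N$ with positive slope $c_1/n$, hence tending to $+\infty$.

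Choose $N_0$ large enough that the right-hand side above is at least $\nu_1 + W$ whenever $N \geq N_0$; this is possible since $c_1/n > 0$. The first statement then follows at once. For the ``in particular'' part, because $\Goal$ is absorbing, the property $\F[\geq N_0](\neg\Goal \wedge \F\Goal)$ holds for $\rho$ iff $\rho$ reaches $\Goal$ for the first time at some step $M \geq N_0 + 1$; applying the first statement to the prefix $s_0 \dots s_{M-1}$ of length $M-1 \geq N_0$, which avoids $\Goal$, gives $\Acc_{w_1}^{M-1}(\rho) \geq \nu_1 + W$, and since the final transition contributes $w_1(s_{M-1},\Goal) \geq -W$, we conclude $\TS_{w_1}^\Goal(\rho) = \Acc_{w_1}^M(\rho) \geq \nu_1$.

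The only delicate point is the cycle-extraction bookkeeping---verifying that each splicing step preserves the identity ``weight of the path = weight of the new path + weight of the extracted simple cycle'' and that the residual path has length at most $n-1$. Both facts are standard; everything else is a matter of choosing $N_0$ with enough slack to absorb the weight of the very last edge into $\Goal$.
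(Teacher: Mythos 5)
Your proof is correct and is essentially the argument the authors intend: the paper states Lemma~\ref{lemma-N0} without proof, but the same cycle-decomposition bound (a residual simple path of at most $n-1$ edges plus at least $(N-n+1)/n$ extracted simple cycles, each contributing at least the minimal positive cycle weight) is exactly the device used for $w_2$ in the proof of Lemma~\ref{lemma:count}. Your care in choosing $N_0$ with slack $W$ so that the final edge into $\Goal$ cannot pull the truncated sum back below $\nu_1$ is a detail the paper glosses over, and it is indeed needed for the ``in particular'' part to hold with the same $N_0$.
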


Using this remark, we can prove:
\begin{lemma}
  $\underline\gamma= \overline\gamma$
\end{lemma}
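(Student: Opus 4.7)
The plan is to use Lemma~\ref{lemma-N0} to show that, for $N$ large enough, the event $\phi_N^-$ coincides (as a set of paths) with its unbounded counterpart $\phi^-$ along outcomes of any strategy $\sigma$ with $\Esp_\sigma(\TS_{w_2}^\Goal)<\nu_2$, and then to conclude by a monotone-convergence argument on the nested events $\phi_N^+\subseteq \phi_{N+1}^+$. First observe that any such~$\sigma$ satisfies $\Prob_\sigma(\F\Goal)=1$, since $\TS_{w_2}^\Goal$ equals $+\infty$ on paths that never reach~$\Goal$. Hence $\phi^+$ and $\phi^-$ partition the outcomes up to a null set, so $\Prob_\sigma(\phi^-)=1-\Prob_\sigma(\phi^+)$.

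Fix now $N\ge N_0$, with $N_0$ given by Lemma~\ref{lemma-N0}. The set-theoretic difference $\phi^-\setminus\phi_N^-$ consists of paths that satisfy $\F\Goal$ and $\TS_{w_1}^\Goal<\nu_1$ but reach $\Goal$ only after step~$N$; such a path satisfies $\F[\ge N_0](\neg\Goal\wedge\F\Goal)$, which by Lemma~\ref{lemma-N0} forces $\TS_{w_1}^\Goal\ge\nu_1$, a contradiction. Therefore $\phi^-=\phi_N^-$, and consequently, for every~$\sigma$ with $\Esp_\sigma(\TS_{w_2}^\Goal)<\nu_2$,
\[
\Prob_\sigma(\phi_N^-)\;=\;\Prob_\sigma(\phi^-)\;=\;1-\Prob_\sigma(\phi^+).
\]
Taking the infimum over such~$\sigma$ shows that the sequence $(\underline\val_N)_{N\ge N_0}$ is constant and equal to $1-\sup_\sigma\Prob_\sigma(\phi^+)$, whence $\underline\gamma=1-\sup_\sigma\Prob_\sigma(\phi^+)$. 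The partition $\phi_N^+\cup\phi_N^-\cup\psi_N$ of all paths similarly yields $\overline\val_N=1-\sup_\sigma\Prob_\sigma(\phi_N^+)$.

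It then remains to prove that $\sup_\sigma\Prob_\sigma(\phi_N^+)\to\sup_\sigma\Prob_\sigma(\phi^+)$ as $N\to\infty$. Since $\phi_N^+\subseteq\phi_{N+1}^+$ and $\bigcup_N\phi_N^+=\phi^+$, continuity of probability from below gives $\Prob_\sigma(\phi_N^+)\uparrow\Prob_\sigma(\phi^+)$ for each fixed~$\sigma$; the monotone interchange $\sup_N\sup_\sigma=\sup_\sigma\sup_N$ then delivers the desired convergence and hence $\overline\gamma=\underline\gamma$.

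I anticipate no serious obstacle: the conceptual step is the identity $\phi^-=\phi_N^-$ for $N\ge N_0$, which is exactly what Lemma~\ref{lemma-N0} is designed for. The supremum over strategies is taken over the open set defined by $\Esp_\sigma(\TS_{w_2}^\Goal)<\nu_2$, but since the final argument invokes only pointwise monotone limits, the lack of compactness or closedness of that set plays no role.
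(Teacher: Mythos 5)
Your proof is correct and follows essentially the same route as the paper: Lemma~\ref{lemma-N0} makes the events $\phi_N^-$ stabilize (hence $\underline\val_N$ stationary for $N\ge N_0$), and a limit/supremum interchange handles $\overline\val_N$. Your $\sup_N\sup_\sigma=\sup_\sigma\sup_N$ argument combined with continuity from below is exactly the content of the paper's intermediate Lemma~\ref{lemma:toto}, just packaged more cleanly than the paper's explicit $\delta$-argument.
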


\begin{proof}
  We fix the index $N_0$ as in Lemma~\ref{lemma-N0}.
  For~any~$N \ge N_0$ and any path~$\rho$ of length larger than~$N$,
  we~have
  \[
  \rho \models \phi_N^- \iff \rho \models \phi_{N_0}^-\qquad \text{and}
  \qquad \rho \models \phi_N^+ \iff \rho \models \phi_{N_0}^+ \vee
  \Big( \G[\le N_0] \neg \Goal \wedge
  \F[{(N_0; N]}] \Goal \Big).
  \]
  From the first equivalence, we infer that for every $N \ge N_0$,
  $\underline\val_N = \underline\val_{N_0}$.
  
  Let $N> N_0$, and write:
  \begin{xalignat*}1 \overline\val_N & = \inf \Big\{\Prob_\sigma
    \Big(\phi_{N}^- \vee \psi_N\Big) \mid \sigma\ \text{s.t.}\
    \Esp_{\sigma}\Big(\TS_{w_2}^\Goal\Big) <
    \nu_2\Big\} \\
    &= 1-\sup\{\Prob_\sigma \Big(\phi_{N}^+\Big) \mid \sigma\
    \text{s.t.}\ \Esp_{\sigma}\Big(\TS_{w_2}^\Goal\Big) <
    \nu_2\Big\} \\
    &= 1-\sup\{\Prob_\sigma \Big(\phi_{N_0}^+ \vee \Big( \G[\le N_0]
    \neg \Goal \wedge \F[{(N_0; N]}] \Goal \Big)\Big) \mid \sigma\
    \text{s.t.}\ \Esp_{\sigma}\Big(\TS_{w_2}^\Goal\Big) < \nu_2\Big\}
  \end{xalignat*}
  
  We claim that:
  \begin{restatable}{lemma}{intermlemma}
  \label{lemma:toto}
    \begin{multline*}
      \lim_{N \to +\infty} \sup\{\Prob_\sigma \Big(\phi_{N_0}^+ \vee
      \Big( \G[\le N_0] \neg \Goal \wedge \F[{(N_0; N]}] \Goal
      \Big)\Big) \mid \sigma\ \text{s.t.}\
      \Esp_{\sigma}\Big(\TS_{w_2}^\Goal\Big) <
      \nu_2\Big\} \\
      = \sup\{\Prob_\sigma \Big(\phi_{N_0}^+ \vee \Big(\G[\le N_0]
      \neg \Goal \wedge \F[>N_0] \Goal\Big)\Big) \mid \sigma\
      \text{s.t.}\ \Esp_{\sigma}\Big(\TS_{w_2}^\Goal\Big) <
      \nu_2\Big\}
      % = \sup\{\Prob_\sigma \Big(\phi_{N_0}^+ \vee \neg \Goal \U[>N_0] \Goal\Big) \mid \sigma\
      % \text{s.t.}\ \Esp_{\sigma}\Big(\TS_{w_2}^\Goal\Big) <
      % \nu_2\Big\}
    \end{multline*}
  \end{restatable}

  \ifproof
  \begin{proof}
    For the sake of readability, we let $\Sigma_{\nu_2}$ be the set of
    strategies~$\sigma$ satisfying condition
    ${\Esp_{\sigma}\Big(\TS_{w_2}^\Goal\Big) < \nu_2}$.
    For~any strategy~$\sigma$, we~have
    \[
      \Prob_\sigma \Big(\phi_{N_0}^+ \vee \Big( \G[\le N_0] \neg \Goal
      \wedge \F[{(N_0; N]}] \Goal \Big)\Big) \le \Prob_\sigma
      \Big(\phi_{N_0}^+ \vee \Big( \G[\le N_0] \neg \Goal \wedge
      \F[>N_0] \Goal \Big)\Big),
    \]
    so that
    \[
      \sup_{\sigma \in\Sigma_{\nu_2}} \Prob_\sigma \Big(\phi_{N_0}^+
      \vee \Big( \G[\le N_0] \neg \Goal \wedge \F[{(N_0; N]}] \Goal
      \Big)\Big) \le \sup_{\sigma \in \Sigma_{\nu_2}} \Prob_\sigma
      \Big(\phi_{N_0}^+ \vee \Big( \G[\le N_0] \neg \Goal \wedge
      \F[>N_0] \Goal \Big)\Big).
    \]
    This proves the inequality
    \[
      \lim_{N\to+\infty} \sup_{\sigma \in\Sigma_{\nu_2}} \Prob_\sigma
      \Big(\phi_{N_0}^+ \vee \Big( \G[\le N_0] \neg \Goal \wedge
      \F[{(N_0; N]}] \Goal \Big)\Big) \le \sup_{\sigma \in
        \Sigma_{\nu_2}} \Prob_\sigma \Big(\phi_{N_0}^+ \vee \Big(
      \G[\le N_0] \neg \Goal \wedge \F[>N_0] \Goal \Big)\Big).
    \]
    Fix $\delta>0$ and pick $\bar\sigma\in\Sigma_{\nu_2}$ such that:
    \[
      \sup_{\sigma\in\Sigma_{\nu_2}} \Prob_\sigma \Big(\phi_{N_0}^+
      \vee \Big( \G[\le N_0] \neg \Goal \wedge \F[>N_0] \Goal
      \Big)\Big) -\delta < \Prob_{\bar\sigma} \Big(\phi_{N_0}^+ \vee
      \Big( \G[\le N_0] \neg \Goal \wedge \F[>N_0] \Goal \Big)\Big).
    \]
    Since $\bar\sigma$ must reaching~$\Goal$ almost-surely,
    there must exist $N_1$ such that for any~$N\geq N_1$,
    \[
      \sup_{\sigma \in\Sigma_{\nu_2}} \Prob_\sigma \Big(\phi_{N_0}^+
      \vee \Big( \G[\le N_0] \neg \Goal \wedge \F[>N_0] \Goal
      \Big)\Big) -\delta <\Prob_{\bar\sigma} \Big(\phi_{N_0}^+ \vee
      \Big( \G[\le N_0] \neg \Goal \wedge
      \F[{(N_0; N]}] \Goal \Big)\Big).
    \]
    It~follows
    \[
      \sup_{\sigma \in\Sigma_{\nu_2}} \Prob_\sigma \Big(\phi_{N_0}^+
      \vee \Big( \G[\le N_0] \neg \Goal \wedge \F[>N_0] \Goal
      \Big)\Big) -\delta < \sup_{\bar\sigma \in\Sigma_{\nu_2}}
      \Prob_{\bar\sigma} \Big(\phi_{N_0}^+ \vee \Big( \G[\le N_0] \neg
      \Goal \wedge \F[{(N_0; N]}] \Goal \Big)\Big)
    \]    
    We conclude that for every $\delta>0$
    \[
      \sup_{\sigma\in\Sigma_{\nu_2}} \Prob_\sigma \Big(\phi_{N_0}^+
      \vee \Big( \G[\le N_0] \neg \Goal \wedge \F[>N_0] \Goal
      \Big)\Big) -\delta \leq \lim_{N \to +\infty} \sup_{\bar\sigma
        \in\Sigma_{\nu_2}} \Prob_{\bar\sigma} \Big(\phi_{N_0}^+ \vee
      \Big( \G[\le N_0] \neg \Goal \wedge \F[{(N_0; N]}] \Goal
      \Big)\Big)
    \] 
    As this holds for every $\delta>0$, we get the converse inequality,
    which proves the lemma.
  \end{proof}

  \fi

  \ifproof\else\noindent\fi
  From this lemma, we get that:
  \begin{xalignat*}1 \lim_{N \to +\infty} \overline\val_N &= 1 -
    \sup\{\Prob_\sigma \Big(\phi_{N_0}^+ \vee \Big(\G[\le N_0] \neg
    \Goal \wedge \F[>N_0] \Goal\Big)\Big) \mid \sigma\ \text{s.t.}\
    \Esp_{\sigma}\Big(\TS_{w_2}^\Goal\Big) < \nu_2\Big\}
    \\
    & = \inf \{\Prob_\sigma \Big(\phi_{N_0}^- \vee \G \neg \Goal\Big)
    \mid \sigma\ \text{s.t.}\
    \Esp_{\sigma}\Big(\TS_{w_2}^\Goal\Big) < \nu_2\Big\} \\
    \tagcomment{because a path not satisfying
      $\phi_{N_0}^+ \vee \Big(\G[\le N_0]
      \neg \Goal \wedge \F[>N_0] \Goal\Big)$ satisfies $\phi_{N_0}^- \vee \G\neg\Goal$}\\
    & = \inf \{\Prob_\sigma \Big(\phi_{N_0}^- \Big) \mid \sigma\
    \text{s.t.}\
    \Esp_{\sigma}\Big(\TS_{w_2}^\Goal\Big) < \nu_2\Big\} \\
    \tagcomment{since $\Prob_\sigma (\G \neg \Goal) =0$
      when $\Esp_{\sigma}(\TS_{w_2}^\Goal) <  \nu_2$} \\
    & = \underline\val_{N_0}
  \end{xalignat*}
  Hence we conclude that $\underline\gamma = \overline\gamma$ (and
  both limits are reached after finitely many steps).
\end{proof}

\begin{remark}
  This result requires the $w_1$-positivity of cycles, as witnessed by
  the remark at the end of the previous section.

  Also, one could think that assuming $w_1$-negativity of cycles would
  be very similar, but this is not the case, as witnessed by
  the 2w-MDP defined by $s \xrightarrow{a,-1,0} s$ and $s
  \xrightarrow{b,1,0} \Goal$. Then, for every $N \ge 2$,
  $\underline\val_N = 0$ while $\overline\val_N=1$.
\end{remark}

\section{Effectiveness of the approach}

We now explain how the two optimization problems can be
solved. We~first unfold our original 2w-MDP~$\calM$ up to depth~$N$ as
a tree, keeping a copy of~$\calM$ below each leaf; write~$\calT_N$ for
this new 2w-MDP. There~is a natural one-to-one mapping from paths
in~$\calM$ and paths in~$\calT_N$, from which we derive another
one-to-one mapping~$\iota_N$ between strategies in~$\calM$ and
strategies in~$\calT_N$.  Furthermore two corresponding strategies
assign the same probabilities and the same accumulated weights to the
paths. As~a consequence, for any~$N$,
any~$\kappa_N \in \{\phi^+_N,\phi^-_N,\psi_N\}$, and any
strategy~$\sigma$ in~$\calM$, we~have
\[
\Prob^\calM_\sigma \Big(\kappa_N\Big) =
\Prob^{\calT_N}_{\iota_N(\sigma)} \Big(\kappa_N\Big).
\]
We do not formalize the relation between $\calM$ and
$\calT_N$ further, as it~is rather straightforward.

Our two optimization problems can then be rephrased in~$\calT_N$ as
follows:
\[
\overline{\val}_N = \inf \Bigl\{\Prob^{\calT_N}_{\iota_N(\sigma)} \Big(\phi_N^-
\vee \psi_N\Big) \Bigm| \sigma\ \text{s.t.}\
\Esp^{\calT_N}_{\iota_N(\sigma)}\Big(\TS_{w_2}^\Goal\Big) < \nu_2\Bigr\}
\]
and 
\[
\underline{\val}_N = \inf \Bigl\{\Prob^{\calT_N}_{\iota_N(\sigma)}
\Big(\phi_N^-\Big) \Bigm| \sigma\ \text{s.t.}\
\Esp^{\calT_N}_{\iota_N(\sigma)}\Big(\TS_{w_2}^\Goal\Big) < \nu_2\Bigr\}.
\]

From $\calT_N$, we build the finite tree~$\widehat\calT_N$ as~follows:
we~keep the first~$N$ levels of~$\calT_N$, add a fresh state~$\Goal$
(at~level~$N+1$), and from each leaf at level~$N$, corresponding to
some state $s$ of~$\calM$, we~add an edge to~$\Goal$ labelled by the
$w_2$-stochastic shortest path value from~$s$ in~$\calM$, that is,
$\inf_\sigma \{\Esp_{\sigma,s}^\calM (\TS_{w_2}^\Goal)\}$.  Those can
be computed~\cite{BBD+18} (note~that each can either be~$-\infty$ or a
finite value, or~$+\infty$ if~$\Goal$ cannot be almost-surely
reached).

Every strategy $\sigma_N$ in~$\calT_N$ can then be partly mimicked
in~$\widehat\calT_N$ (up~to the $N$-th level of the~tree);
at~level~$N$, there is a single transition, which directly
reaches~$\Goal$ while increasing weight~$w_2$ by the shortest-path
value mentioned above.  We~write $\widehat\sigma_N$ for this strategy
in~$\widehat\calT_N$ derived from~$\sigma_N$.  Then we~have
$\Esp^{\widehat\calT_N}_{\widehat\sigma_N}(\TS_{w_2}^\Goal) \le
\Esp^{\calT_N}_{\sigma_N}(\TS_{w_2}^\Goal)$, since the transitions
from the nodes at level~$N$ to the~$\Goal$ state in~$\widehat\calT_N$
somehow acts as an``optimal'' strategy after the first $N$ levels.

Conversely, for every strategy~$\widehat\sigma_N$ in~$\widehat\calT_N$
such that
$\Esp^{\widehat\calT_N}_{\widehat\sigma_N}(\TS_{w_2}^\Goal) < \nu_2$:
\begin{itemize}
\item if
  $\Esp^{\widehat\calT_N}_{\widehat\sigma_N}(\TS_{w_2}^\Goal) =
  -\infty$, then for every $r \in \mathbb{R}$, one can extend
  $\widehat\sigma_N$ into a strategy~$\sigma_{N,r}$ in~$\calT_N$ such that
  $\Esp^{\calT_N}_{\sigma_{N,r}}(\TS_{w_2}^\Goal) < r$;
\item otherwise, if
  $\Esp^{\widehat\calT_N}_{\widehat\sigma_N}(\TS_{w_2}^\Goal) \ne
  -\infty$, then one can extend $\widehat\sigma_N$ into a strategy
  $\sigma_N$ in~$\calT_N$ such that
  $\Esp^{\widehat\calT_N}_{\widehat\sigma_N}(\TS_{w_2}^\Goal) =
  \Esp^{\calT_N}_{\sigma_N}(\TS_{w_2}^\Goal)$.
\end{itemize}
Hence the set of strategies $\widehat\sigma_N$ in $\widehat\calT_N$ such
that $\Esp^{\widehat\calT_N}_{\widehat\sigma_N}(\TS_{w_2}^\Goal)
< \nu_2$ coincides with the set of strategies obtained as a pruning of
a strategy $\sigma_N$ in $\calT_N$ such that
$\Esp^{\calT_N}_{\sigma_N}(\TS_{w_2}^\Goal) < \nu_2$.
Our~two optimization problems can then be rephrased~as:
\[
\overline{\val}_N = \inf
\Big\{\Prob^{\widehat\calT_N}_{\widehat\sigma_N} \Big(\phi_N^- \vee
\psi_N\Big) \mid \widehat\sigma_N\ \text{s.t.}\
\Esp^{\widehat\calT_N}_{\widehat\sigma_N}\Big(\TS_{w_2}^\Goal\Big) <
\nu_2\Big\}
\]
and 
\[
\underline{\val}_N = \inf
\Big\{\Prob^{\widehat\calT_N}_{\widehat\sigma_N} \Big(\phi_N^-\Big) \mid
\widehat\sigma_N\ \text{s.t.}\
\Esp^{\widehat\calT_N}_{\widehat\sigma_N}\Big(\TS_{w_2}^\Goal\Big) <
\nu_2\Big\}
\]

Since $\widehat\calT_N$ is a (finite) tree, each
strategy~$\widehat\sigma_N$ in that MDP is memoryless, and can be
represented as a probability value given to each edge appearing in the
tree.

For each node $\mathsf{n}$ of $\widehat\calT_N$, corresponding to some
state~$s$ of~$\calM$, and for each edge $e = (s,\delta)$ from~$s$,
we~consider a variable~$p_{\mathsf{n},e}$, intended to represent the
probability of taking edge~$e$ at node~$\mathsf{n}$. In~particular, we
will have the constraints $0 \le p_{\mathsf{n},e} \le 1$ and
$\sum_{e=(s,\delta)} p_{\mathsf{n},e} = 1$. We write $\mathfrak{P} =
(p_{\mathsf{n},e})_{\mathsf{n},e}$ for the tuple of all variables.

The two optimization problems above can then be written as:
\[
\inf_{\mathfrak{P}}\Big\{P(\mathfrak{P}) \Bigm| Q(\mathfrak{P}) < \nu_2
\wedge \bigwedge_{\mathsf{n},e} 0 \le p_{\mathsf{n},e} \le 1 \wedge
\bigwedge_{\mathsf{n}} \sum_{e=(s,\delta)} p_{\mathsf{n},e} = 1\Big\}
\]
where $P(\mathfrak{P})$ and $Q(\mathfrak{P})$ are polynomials (of
degree at most $N$).

Such polynomial optimization problems are in general hard to solve,
and we have not been able to exploit the particular shape of our
optimization problem to get efficient specialized
algorithms. For~each~$N$, arbitrary under-approximations
of~$\underline\val_N$ and over-approximations of~$\overline\val_N$ can
be obtained by binary search, using the existential theory of the
reals: the~latter problem can be solved in polynomial space, but the
number of variables of our problem is exponential in~$N$. Using
Lemmas~\ref{lemma:sol} and~\ref{lemma:underline}, we~get informations
about the cartography of our problem. We~can get approximations
of~$\underline\gamma$ and~$\overline\gamma$ by iterating this
procedure for larger values of~$N$.

\begin{remark}
  In case the constraint on the expectation of $w_2$ can be relaxed
  (for~instance if it is trivially satisfied), then the problem
  (over~$\widehat\calT_N$) becomes a simple optimal reachability
  problem in an~MDP, for~which pure strategies are sufficient (we~have
  seen that this cannot be the case in our setting). The above
  optimization problem then simplifies into a linear-programming problem,
  with a much better complexity.
\end{remark}

\section{The special case of $\problem_{\calM,\nu_1,\nu_2}(0)$}

While the previous developments cannot give a solution to
$\problem_{\calM,\nu_1,\nu_2}(0)$ since it requires not only to show
that $\underline\gamma = \overline\gamma=0$, but also that there is a
solution to the limit point $\overline\gamma$ (which we do not have in
general).  We dedicate special developments to that problem. In~this
section, we~assume (w.l.o.g.) that weights take integer values.

$\problem_{\calM,\nu_1,\nu_2}(0)$ can be rephrased as follows: \emph{there
exists a strategy $\sigma_0$ such that:
\begin{enumerate}
\item for all $\rho \in \out^\calM_{\infty}(\sigma_0,s_\init)$,
  $\TS_{w_1}^{\Goal}(\rho) \ge \nu_1$;
\item $\Esp^\calM_{\sigma_\epsilon,s_\init}\Big(\TS_{w_2}^\Goal\Big) <
  \nu_2$.
\end{enumerate}}
Note that this problem is somehow a ``beyond worst-case problem'', as
defined in~\cite{RRS17}, with a strong constraint on all outcomes, and
a stochastic constraint (here defined using expected value).

We describe a solution in the case all cycles of~$\calM$ have
non-negative $w_1$-weights, which is inspired
from~\cite[Theorem~13]{RRS17}.  As~we explain below, our~solution
extends to multiple weights (with non-negative cycles) with strong
constraints (like the one for~$w_1$). However, it~is not correct when
$w_1$ may have negative cycles as well. In that case, the status of
the problem remains open.

We~``unfold'' the 2w-MDP~$\calM = (S,s_\init,\Goal,E,w_1,w_2)$ into a
1w-MDP $\calN = (Q,q_\init,Q_\Goal,T,w)$, explicitly keeping track
of~$w_1$ in the states of~$\calN$:
\begin{itemize}
\item
  $Q = S \times \{-M, -M+1, \ldots, 0,1,\ldots,M+\lfloor
  \nu_1\rfloor+1,\infty\}$, where $M = W \cdot (|S|+1)$, $W$~is the
  maximal absolute value of all weights~$w_1(s,s')$ in~$\calM$, and
  $\lfloor \nu_1\rfloor$ is the integral part of $\nu_1$;
\item $q_\init = (s_\init,0)$;
\item $Q_\Goal = \{(\Goal,k) \mid k =\infty\ \text{or}\ k \ge
  \nu_1\}$;
\item $T = \Bigl\{((s,c),(s',c')) \Bigm| (s,s') \in E\ \text{and}\
  \begin{array}{l}
    c'=\infty    \qquad \text{ if $c+w_1(s,s')>M+\lfloor \nu_1 \rfloor
    + 1$ or $c=\infty$} \\
    c'= c+w_1(s,s') \qquad\text{otherwise}
  \end{array}\Bigr\}$;
\item $w ((s,c),(s',c')) = w_2(s,s')$.
\end{itemize}
There is a natural one-to-one correspondence $\lambda$ between paths
in~$\calM$ and those in~$\calN$:
$\lambda(s_0s_1 \dots s_k \dots) = (s_0,0)(s_1,c_1) \dots (s_k,c_k)
\dots$ where, for every $k$, $c_k=c_{k-1}+w_1(s_{k-1},s_k)$ if this
value is less than or equal to~$M + \lfloor \nu_1 \rfloor +1$, and
$c_k=\infty$ otherwise. Notice that thanks to our hypothesis on
cycles, $c_k$~may never be less than~$-M$.

Also, by construction, if $(s_0,0)(s_1,c_1) \dots (s_k,c_k) \dots$ is
a path in~$\calN$ such that $c_k=\infty$, then for every $j \ge k$,
$c_j=\infty$; in~that case, in~the corresponding path
$s_0 s_1 \dots s_k \dots$ in~$\calM$, it~is the case for every
$j \ge k$ that $\Acc_{w_1}^j(s_0 s_1 \dots s_k \dots) \ge \nu_1$
(indeed, once the accumulated weight has become larger than $M+\nu_1$,
it can never be smaller than $\nu_1$ again, thanks to the hypothesis
on cycles). Conversely, if $c_k<\infty$, then
$\Acc^k_{w_1}(s_0 s_1 \dots s_k \dots) =c_k$.

From that correspondence over paths, strategies in~$\calM$ can
equivalently be seen as strategies in~$\calN$ via a mapping~$\iota$.
Using this correspondence:

\begin{restatable}{lemma}{dernierlemme}
  There is a solution to $\problem_{\calM,\nu_1,\nu_2}(0)$ if, and
  only~if, in~$\calN$ there is a strategy~$\tau$ such that:
  \begin{enumerate}
  \item for all $\rho \in \out_\infty^\calN(\tau,q_\init)$, $\rho
    \models \F Q_\Goal$;
  \item $\Esp^\calN_{\tau,q_\init}\Big(\TS_{w}^{Q_\Goal}\Big) <
    \nu_2$.
  \end{enumerate}
  Furthermore, if $\sigma$ is a solution to
  $\problem_{\calM,\nu_1,\nu_2}(0)$, then $\tau=\iota(\sigma)$ is a solution
  to the above problem in~$\calN$; and if $\tau$ is a solution to the
  above problem in~$\calN$, then $\sigma=\iota^{-1}(\tau)$ is a solution to
  $\problem_{\calM,\nu_1,\nu_2}(0)$.
\end{restatable}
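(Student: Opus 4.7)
The plan is to lift the path bijection $\lambda$ and the induced strategy bijection $\iota$ into an equivalence between the two winning conditions. First I would observe that $\iota$ preserves the probabilistic structure: since the second component of each state in $\calN$ is deterministically derived from the first component and the chosen edge, the distribution over successors at $(s,c)$ under $\iota(\sigma)$ matches the distribution at $s$ under $\sigma$. Consequently $\lambda$ transports the cylinder measure, so that $\Prob^\calM_\sigma(\lambda^{-1}(A)) = \Prob^\calN_{\iota(\sigma)}(A)$ for every measurable event $A$ on $\calN$-paths, and similarly for expectations of corresponding random variables.

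Second, I would identify the two success events: a path $\rho = s_0 s_1 \cdots$ in $\calM$ satisfies $\F \Goal \wedge \TS_{w_1}^\Goal \ge \nu_1$ if and only if $\lambda(\rho)$ satisfies $\F Q_\Goal$. The easy case is when the $c$-component stays finite until $\Goal$ is first visited: then $c_N = \Acc_{w_1}^N(\rho) = \TS_{w_1}^\Goal(\rho)$, and membership in $Q_\Goal$ is precisely $c_N \ge \nu_1$. The delicate case is when some $c_k$ becomes $\infty$: the accumulated weight must have strictly exceeded $M + \lfloor \nu_1 \rfloor + 1$ at step $k$, and I must verify that $\TS_{w_1}^\Goal(\rho) \ge \nu_1$. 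This is exactly where the nonnegative-cycle hypothesis enters: decomposing the $\calM$-subpath from step $k$ to any later step $j$ as a simple path plus nonnegative cycles, the simple path contributes at least $-W(|S|-1)$, so $\Acc_{w_1}^j(\rho) \ge \Acc_{w_1}^k(\rho) - W(|S|-1) > \nu_1$ for all $j \ge k$, which covers in particular the first visit to $\Goal$.

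Third, I would identify the two quantitative objectives. By construction the weight $w$ of an edge $((s,c),(s',c'))$ in $\calN$ equals $w_2(s,s')$, and by the previous step the first visit to $Q_\Goal$ along $\lambda(\rho)$ occurs at the same index as the first visit to $\Goal$ along $\rho$ (and if neither occurs, both truncated sums are $+\infty$). Hence $\TS_{w}^{Q_\Goal}(\lambda(\rho)) = \TS_{w_2}^\Goal(\rho)$ pathwise, and combined with the first step the expectations under $\sigma$ and $\iota(\sigma)$ coincide. Putting the three observations together yields the equivalence of the two sets of conditions, with the advertised $\sigma \leftrightarrow \tau$ correspondence given by $\iota$ and $\iota^{-1}$.

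The main obstacle is the handling of the $\infty$-sink second component: one must verify both that it absorbs only paths whose $\TS_{w_1}^\Goal$ automatically satisfies the constraint---which rests on the nonnegative-cycle hypothesis and on the careful choice $M = W(|S|+1)$ in the construction of $\calN$---and that identifying those paths by collapsing their second component to $\infty$ does not distort the $w_2$-expectation, which holds because the $w_2$-weights encountered after the threshold is crossed are still faithfully recorded by the weight function $w$ of $\calN$.
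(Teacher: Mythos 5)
Your proposal is correct and follows essentially the same route as the paper: the path/strategy correspondence $\lambda$, $\iota$ preserving the probability measure and the $w_2$-weights, the identification of $\F\Goal\wedge\TS_{w_1}^\Goal\ge\nu_1$ with $\F Q_\Goal$, and the nonnegative-cycle argument showing that once the tracked weight saturates to $\infty$ the accumulated $w_1$-weight can never drop below $\nu_1$ again (an argument the paper places in the discussion just before the lemma and then invokes inside a proof by contradiction). Your version is in fact more explicit than the paper's on the two delicate points---the quantitative bound $\Acc_{w_1}^j\ge\Acc_{w_1}^k-W(|S|-1)$ justifying the choice of $M$, and the pathwise identity $\TS_w^{Q_\Goal}\circ\lambda=\TS_{w_2}^\Goal$ on the relevant outcomes---so no gap to report.
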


\ifproof
\begin{proof}
  Let $\sigma$ be a solution to $\problem_{\calM,\nu_1,\nu_2}(0)$, and
  let $\tau=\iota(\sigma)$. Assume towards a contradiction that there
  is $\rho \in \out_\infty^\calN(\tau,q_\init)$ such that $\rho
  \not\models \F Q_\Goal$. Write $\rho = (s_0,0) (s_1,c_1) \ldots
  (s_k,c_k) \ldots$ By~correspondence of the strategies,
  the~path~$\lambda^{-1}(\rho)$ belongs to $
  \out_\infty^\calM(\sigma,s_\init)$, hence $\lambda^{-1}(\rho)
  \models \TS_{w_1}^\Goal \ge \nu_1$ and $\rho \models \F (\{\Goal\}
  \times \{0,1,\ldots,M,\infty\})$. However, since $\rho \not\models \F
  Q_\Goal$, $\rho$~must end up in  some state~$(\Goal,c) \notin
  Q_\Goal$, hence with ${c<\nu_1}$. Justified by the previous
  discussion, $\rho$~never visits a state~$(s,\infty)$, hence for
  every $k$, $c_k = \Acc_{w_1}^k(\lambda^{-1}(\rho))$. This is a
  contradiction.

  Now, the expectation of $w_2$ in $\calM$ is the same as that of $w$
  in $\calN$, hence the second condition is also satisfied. Strategy
  $\tau$ is therefore a solution to the above problem.

  \smallskip

  Conversely, pick a strategy $\tau$ in $\calN$ which is
  solution to the problem described in the Lemma.
  Let~$\sigma=\iota^{-1}(\tau)$.
  Since any path $\rho \in
  \out_\infty^\calN(\tau,q_\init)$ satisfies $\F Q_\Goal$, we get
  that any path $\rho \in \out_\infty^\calM(\sigma,s_\init)$ satisfies
  $\F \Goal$ and $\Prob^\calM_{\sigma} (\TS_{w_1}^\Goal \ge
  \nu_1) = 1$ (this is easy to prove). The constraint on the
  expectation of $w_2$ is also straightforward. This concludes the
  proof.
\end{proof}
\fi

\ifproof\else We can further show that
$\problem_{\calM,\nu_1,\nu_2}(0)$ has a solution if, and only~if, the
stochastic $w_2$-shortest-path
% -value~$\pi$
for reaching~$Q_\Goal$ in~$\calN$ from~$s_\init$ is smaller
than~$\nu_2$.
This latter problem can be decided in \PTIME~\cite{BBD+18}. However,
the~size of~$\calN$ is exponential (more precisely, it~is
pseudo-polynomial in the size of~$\calM$).
In~the~end:
\fi
\begin{restatable}{theorem}{thmprobzero}
  One can decide $\problem_{\calM,\nu_1,\nu_2}(0)$ in
  pseudo-polynomial time, when cycles of $\calM$ have a non-negative
  $w_1$-weight.
\end{restatable}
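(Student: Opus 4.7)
The plan is to combine the preceding reduction lemma with the polynomial-time stochastic shortest-path algorithm of~\cite{BBD+18}. By that lemma, $\problem_{\calM,\nu_1,\nu_2}(0)$ admits a solution if and only if $\calN$ admits a strategy~$\tau$ from~$q_\init$ that surely reaches~$Q_\Goal$ along every outcome and satisfies $\Esp^\calN_{\tau,q_\init}(\TS_w^{Q_\Goal})<\nu_2$. The first step is therefore to recast this condition as a single stochastic shortest-path question on~$\calN$.

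The equivalence I~would establish is that $\problem_{\calM,\nu_1,\nu_2}(0)$ admits a solution iff the stochastic $w$-shortest-path value to~$Q_\Goal$ in~$\calN$ from~$q_\init$ is strictly smaller than~$\nu_2$. One direction is immediate: any strategy witnessing the reduction lemma also witnesses that the SSP value lies below~$\nu_2$. For the converse, I~would first pre-process~$\calN$ by computing, in linear time via a standard backward fixpoint (a safety game against the probabilistic player), the sub-MDP~$\calN'$ induced by those states from which $Q_\Goal$ can be reached surely; any sure-reaching strategy lives entirely inside~$\calN'$, and every state of~$\calN'$ admits a memoryless attractor to~$Q_\Goal$. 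A short argument in the spirit of the proof of Lemma~\ref{lemma:sol}---play an almost-surely optimal SSP strategy for sufficiently many steps and then switch to the attractor, incurring only an arbitrarily small additional expected cost---then shows that the infimum of $\Esp(\TS_w^{Q_\Goal})$ taken over \emph{sure}-reaching strategies in~$\calN'$ coincides with the ordinary SSP value on~$\calN'$. Since the bound $<\nu_2$ is strict, this is enough to recover a solution when the SSP value is below~$\nu_2$.

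The SSP decision problem on a finite MDP is solvable in polynomial time in the size of the MDP by~\cite{BBD+18}, and their algorithm also detects the degenerate $-\infty$ and $+\infty$ cases. It~therefore suffices to bound~$|\calN|$: by construction $|Q|=|S|\cdot(2M+\lfloor\nu_1\rfloor+3)$ with $M=W\cdot(|S|+1)$, where~$W$ is the largest absolute $w_1$-weight in~$\calM$, so the state space (and the transition relation) is polynomial in $|S|$, $W$ and~$\nu_1$, i.e.~pseudo-polynomial in the binary encoding of~$\calM$. Together this gives the announced pseudo-polynomial procedure. The main delicate point is the reconciliation of the sure-reachability side condition (inherited from the worst-case constraint on~$w_1$) with the almost-sure guarantee that already comes for free from a finite expected truncated sum; the restriction to~$\calN'$ followed by the attractor-completion argument is what smooths this over. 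The non-negative-cycle hypothesis on~$w_1$ is used only through the construction of~$\calN$, where it bounds the $w_1$-counter below by~$-M$ and thereby keeps $Q$ finite.
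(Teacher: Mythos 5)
Your overall route is the same as the paper's: reduce via the preceding lemma to a sure-reachability-plus-expectation problem on the pseudo-polynomial unfolding~$\calN$, recast that as a stochastic shortest-path (SSP) question solvable in polynomial time by~\cite{BBD+18}, and bound $|\calN|$ polynomially in $|S|$, $W$ and $\nu_1$. The size analysis and the attractor-switching argument in the spirit of Lemma~\ref{lemma:sol} are exactly the intended ones, and your pre-processing step---restricting to the sub-MDP~$\calN'$ of states from which $Q_\Goal$ is \emph{surely} reachable---is a sound and welcome way of making the switching argument rigorous, since it is precisely on~$\calN'$ that a memoryless attractor with uniformly bounded accumulated $w$-weight is guaranteed to exist.

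There is, however, a genuine mismatch in the equivalence you announce. You state the criterion as ``the SSP value to~$Q_\Goal$ \emph{in~$\calN$} from~$q_\init$ is $<\nu_2$'', but your converse argument only yields a solution when the SSP value \emph{on~$\calN'$} is $<\nu_2$. These two quantities can differ: the unrestricted infimum $\inf_\sigma \Esp^\calN_{\sigma,q_\init}(\TS_{w}^{Q_\Goal})$ ranges over all strategies reaching $Q_\Goal$ \emph{almost surely}, and its value may be approached only by strategies leaving~$\calN'$. Concretely, add a state~$s_1$ whose unique edge goes to~$\Goal$ with probability $1/2$ and $w$-weight $-100$, and back to~$s_1$ with probability $1/2$ and weight~$0$ (all $w_1$-weights zero, $\nu_1=0$, so the hypothesis on cycles and the $w_1$-constraint are vacuous): the SSP value through~$s_1$ is $-100$, yet $s_1\notin\calN'$ because the all-$s_1$ path is an outcome of every strategy. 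If the only alternative from~$q_\init$ is a direct edge to~$\Goal$ of weight~$0$ and $\nu_2=-50$, then the SSP value on~$\calN$ is $-100<\nu_2$ while $\problem_{\calM,\nu_1,\nu_2}(0)$ has no solution. The repair lies entirely within your own machinery: note that the ``immediate'' direction also holds relative to~$\calN'$, since any witnessing strategy lives in~$\calN'$ and hence already bounds the SSP value \emph{of~$\calN'$}; then state and decide the criterion on~$\calN'$ (answering \texttt{false} outright when $q_\init\notin\calN'$). The pseudo-polynomial complexity is unaffected since $\calN'$ is computed by a linear-time attractor fixpoint and $|\calN'|\le|\calN|$.
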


\ifproof
\begin{proof}
  We show that there is a solution to that problem if, and only~if,
  the shortest-path $w$-value~$\pi$ for reaching~$Q_\Goal$ in~$\calN$
  from~$s_\init$ is smaller than~$\nu_2$. This latter problem can be
  decided in \PTIME~\cite{BBD+18}. However, the~size of~$\calN$ is
  exponential (more precisely, it~is pseudo-polynomial in the size
  of~$\calM$). Hence the announced complexity.

  The right-to-left implication is obvious. Assume now that the
  shortest-path $w$-value~$\pi$
  is smaller than~$\nu_2$. Let~$0<\eta < \nu_2-\pi$.
  We~distinguish between two cases:
  \begin{itemize}
  \item either $\pi$ is $-\infty$: we~then let $\sigma_{\text{SP}}$ be
    a strategy achieving expected value smaller than~$\nu_2-\eta$;
  \item or $\pi$ is finite:  we then let $\sigma_{\text{SP}}$ be a
    strategy achieving expected value~$\pi$. 
  \end{itemize}
  In both cases,
  \[
  \Esp^\calN_{\sigma_{\text{SP}},q_\init}\Big(\TS_{w}^{Q_\Goal}\Big) +
  \eta < \nu_2.
  \]
  Note that all the states activated by~$\sigma_{\text{SP}}$ belong to
  the attractor of~$Q_\Goal$, since the probability to reach that set
  is~$1$. Let~$\sigma_{\text{Att}}$ be a (memoryless) attractor
  strategy from every activated state to~$Q_\Goal$. The trees rooted at
  each state~$s$ and generated by~$\sigma_{\text{Att}}$ are finite, and
  one can bound the maximal accumulated $w$-weight by some uniform
  bound~$K$.
  
  Following the proof of Lemma~\ref{lemma:sol}, for $k \ge N$, we
  define~$\sigma^k$~as follows: play~$\sigma_{\text{Att}}$ for the
  first $k$~steps, and if~$\Goal$ has not been reached after $k$ steps,
  then play~$\sigma_{\text{Att}}$. We~show that we can find $k$ large
  enough such that this strategy is a solution to our problem.

  The first condition is satisfied, since either the target state is
  reached during the first $k$~steps (i.e.,~while
  playing~$\sigma_{\text{SP}}$), or~it~is surely reached when
  playing~$\sigma_{\text{Att}}$. Now, thanks to the law of total
  expectation, we can write:
  \begin{xalignat*}1
    \Esp^{\calN}_{\sigma^k}\Big(\TS_{w}^{Q_\Goal}\Big) & = 
    \Esp^{\calN}_{\sigma^k}\Big(\TS_{w}^{Q_\Goal} \Bigm| \F[\le k]
    Q_\Goal\Big) \cdot \Prob^{\calN}_{\sigma^k}(\F[\le k] Q_\Goal) +
    \Esp^{\calN}_{\sigma^k}\Big(\TS_{w}^{Q_\Goal} \Bigm| \G[\le k] 
    \neg Q_\Goal\Big) \cdot \Prob^{\calN}_{\sigma^k}(\G[\le k]
    \neg Q_\Goal) \\
    & \le 
    \Esp^{\calN}_{\sigma^k}\Big(\Acc^k_{w} \Bigm| \F[\le k]
    Q_\Goal\Big) \cdot \Prob^{\calN}_{\sigma^k}(\F[\le k] Q_\Goal) +
    \Esp^{\calN}_{\sigma^k}\Big(\Acc^k_{w} + K \Bigm| \G[\le k] 
    \neg Q_\Goal\Big) \cdot \Prob^{\calN}_{\sigma^k}(\G[\le k]
    \neg Q_\Goal) \\
    \tagcomment{since the global impact of playing the strategy 
      $\sigma_{\text{Att}}$ is bounded by $K$} \\
    & =  \Esp^{\calN}_{\sigma^k}\Big(\Acc^k_{w}\Big)
    + K \cdot \Prob^{\calN}_{\sigma^k}(\G[\le k]
    \neg Q_\Goal) \\
    \tagcomment{by linearity of expectation and the law of total
      expectation again} \\
    & =  \Esp^{\calN}_{\sigma_{\text{SP}}}\Big(\Acc^k_{w}\Big)
    + K \cdot \Prob^{\calN}_{\sigma_{\text{SP}}}(\G[\le k]
    \neg Q_\Goal) 
    \tagcomment{since $\sigma^k$ coincides with $\sigma_{\text{SP}}$
      on the first $k$ steps}
  \end{xalignat*}
  Now, since
  $\Esp^{\calN}_{\sigma_{\text{SP}}}\Big(\TS_{w}^{Q_\Goal}\Big)$ is
  finite, it is the case that $\Prob_{\sigma_{\text{SP}}}^{\calN} \Big(\F
  Q_\Goal\Big) = 1$. Hence:
  \begin{itemize}
  \item $\lim_{k \to +\infty}
    \Esp^{\calN}_{\sigma_{\text{SP}}}\Big(\Acc^k_{w}\Big) =
    \Esp^{\calN}_{\sigma_{\text{SP}}}\Big(\TS_{w}^{Q_\Goal}\Big)$, and
  \item $\lim_{k \to +\infty}
    \Prob^{\calN}_{\sigma_{\text{SP}}}(\G[\le k] \neg Q_\Goal) =0$.
  \end{itemize}
  One can choose $k$ large enough so that
  \[
  \Big|\Esp^{\calN}_{\sigma_{\text{SP}}}\Big(\Acc^k_{w}\Big) -
  \Esp^{\calN}_{\sigma_{\text{SP}}}\Big(\TS_{w}^{Q_\Goal}\Big)\Big| <
  \eta/2
  \qquad\text{ and }\qquad
  \Prob^{\calN}_{\sigma_{\text{SP}}}(\G[\le k] \neg
  Q_\Goal) < \eta/2K.
  \]
  We conclude that:
  \[
    \Esp^{\calN}_{\sigma^k}\Big(\TS_{w}^{Q_\Goal}\Big) < 
    \Esp^{\calN}_{\sigma_{\text{SP}}}\Big(\TS_{w}^{Q_\Goal}\Big) + \eta 
    < \nu_2.
  \]
  Strategy $\sigma^k$ is therefore a witness strategy for our problem.
\end{proof}
\fi

\begin{remark}
  Notice that we could also have assumed that all cycles have
  non-positive $w_1$-weight: the~construction of~$\calN$ would be
  similar, but with state space $S\times
  \{-\infty,\lfloor\nu_1\rfloor-1-M', ..., 0, 1, ..., M'+1\}$ where
  $M'=W'\cdot (|S|+1)$ and $W'$~is the largest absolute value $w_1$-weight
  in~$\calM$. The~rest of the argumentation follows the same ideas as above.

  Notice also that our algorithm is readily adapted to the case where
  we may have several constraints similar to those on~$w_1$ (for each
  extra variable, one should assume that either each cycle is
  non-negative, or each cycle is non-positive). It~suffices to keep
  track of the extra weights in the states, and take as target all
  states where the constraints are fulfilled.
\end{remark}

\section{Conclusion}

In this paper, we investigated a multi-constrained reachability
problem over MDPs, which originated in the context of electric-vehicle
charging~\cite{GBBLM17}.
This~problem consists in finding a strategy that surely reaches a
quantitative goal (e.g., \emph{all vehicles are fully charged and the
  load of the network remains below a given bound at any time}) while
satisfying a condition on the expected value of some variable
(\emph{the~life expectancy of the transformer is high} or \emph{the
  expected cost of charging all vehicles is minimized}).
We~developed partial solutions to the problem by providing a
cartography of the solutions to (a~relaxed version~of) the problem.
We~identified realistic conditions under which the cartography is
(almost) complete. However, even under these conditions, the general decision
problem (given $\epsilon$, does $\problem(\epsilon)$ have a solution?)
remains open so far.
Also, the~case of MDPs not satisfying these conditions
%is unclear
remains also open, but we believe that our approximation techniques
may give interesting informations which suffice for practical
applications such as electric-vehicle charging.

Our approach for $\problem(0)$, which amounts to explicitly keep track
of the worst-case constraint on~$w_1$, immediately extends to multiple
weights with worst-case constraints (with~the same assumptions on
cycles---note~that the more general setting could not be solved, which
has to be put in parallel with the undecidability result
of~\cite[Theorem 12]{RRS17}) for the multi-dimensional percentile
problem for truncated sum payoffs.
% {\color{red}\fbox{expliciter ce r\'esultat, je ne l'ai pas sous la
% main}}
%
The~cartography for the relaxed problem $\problem(\epsilon)$ requires
solving sequences of intermediary optimization problems, which can be
expressed as polynomial optimization problems with polynomial
constraints.  It~could be extended to several such weights as~well
(either~by putting an assumption on the $w_1$-weights of every cycle,
or just on the $w_2$-weight of every cycle).
A~nice continuation of our work would consist in computing
(approximations~of) Pareto-optimal solutions in such a
setting. Improving the complexity and practicality of our approach is
also on our agenda for future work.

\paragraph{Acknowledgement.}
We thank the anonymous reviewers for their careful reading of our
submission.

Patricia Bouyer, Mauricio Gonz{\'a}lez and Nicolas Markey are
supported by ERC project EQualIS. Mickael Randour is an F.R.S.-FNRS
Research Associate, and he is supported by the F.R.S.-FNRS Incentive
Grant ManySynth.

\makeatletter
\providecommand{\doi}{\catcode`\_11\@doi}
\providecommand{\@doi}[1]{doi:\urlalt{http://dx.doi.org/#1}{#1}}
%% slightly shortened \itemsep
\renewenvironment{thebibliography}[1]
     {\section*{\refname}\small%              small added
      \list{\@biblabel{\@arabic\c@enumiv}}%
           {\settowidth\labelwidth{\@biblabel{#1}}%
            \leftmargin\labelwidth
            \advance\leftmargin\labelsep
            \@openbib@code
            \usecounter{enumiv}%
            \let\p@enumiv\@empty
            \renewcommand\theenumiv{\@arabic\c@enumiv}}%
      \sloppy
      \clubpenalty4000
      \@clubpenalty \clubpenalty
      \widowpenalty4000%
      \sfcode`\.\@m
                \setlength{\parskip}{0pt}%
                \setlength{\itemsep}{3pt plus 2pt minus 1pt}% less space between items
     }
     {\def\@noitemerr
       {\@latex@warning{Empty `thebibliography' environment}}%
      \endlist}
\makeatother
\bibliographystyle{eptcs}
%\bibliography{biblio,extra}
\bibliography{bibexport}

\end{document}